\newcommand{\restateref}[1]{\IfAppendix{\hyperref[#1]{$\clubsuit$}}{\hyperref[#1*]{$\clubsuit$}}}
\theoremstyle{plain}
\newtheorem{thm}{Theorem}
\newtheorem{cor}{Corollary}
\newcommand{\dptable}{\mathsf{dp}}
\newcommand{\FV}{}  
\title{Finding Order-Preserving Subgraphs} 
\author{Haruya Imamura}{Kyushu Institute of Technology, Japan}{imamura.haruya389@mail.kyutech.jp}{https://orcid.org/0009-0007-0901-6542}{}
\author{Yasuaki Kobayashi}{Hokkaido University, Sapporo, Hokkaido, Japan}{koba@ist.hokudai.ac.jp}{https://orcid.org/0000-0003-3244-6915}{JSPS KAKENHI Grant Numbers 
JP23K28034, 
JP24H00686, 
JP24H00697.
}
\author{Yota Otachi}{Nagoya University, Nagoya, Aichi, Japan}{otachi@nagoya-u.jp}{https://orcid.org/0000-0002-0087-853X}{JSPS KAKENHI Grant Numbers
JP22H00513, 
JP24H00697, 
JP25K03076, 
JP25K03077. 
}
\author{Toshiki Saitoh}{Kyushu Institute of Technology, Iizuka, Fukuoka, Japan}{toshikis@ai.kyutech.ac.jp}{https://orcid.org/0000-0003-4676-5167}{JSPS KAKENHI Grant Numbers JP24H00697, JP24K14827.}
\author{Keita Sato}{Muroran Institute of Technology, Muroran, Hokkaido, Japan}{}{}{}
\author{Asahi Takaoka}{Muroran Institute of Technology, Muroran, Hokkaido, Japan}{takaoka@muroran-it.ac.jp}{https://orcid.org/0000-0002-0194-7138}{JSPS KAKENHI Grant Number JP23K03191.}
\author{Ryo Yoshinaka}{Tohoku University, Sendai, Miyagi, Japan}{ryoshinaka@tohoku.ac.jp}{https://orcid.org/0000-0002-5175-465X}{JSPS KAKENHI Grant Numbers
JP24H00697, 
JP24K14827. 
}
\author{Tom C. {van der Zanden}}{Maastricht University, Maastricht, The Netherlands}{t.vanderzanden@maastrichtuniversity.nl}{https://orcid.org/0000-0003-3080-3210}{}
\authorrunning{H. Imamura et al.} 
\keywords{Ordered (induced) subgraph isomorphism, graph classes, (in)tractability} 
\begin{document}

\maketitle

\begin{abstract} 
\textsc{(Induced) Subgraph Isomorphism} and \textsc{Maximum Common (Induced) Subgraph} are fundamental problems in graph pattern matching and similarity computation. In graphs derived from time-series data or protein structures, a natural total ordering of vertices often arises from their underlying structure, such as temporal sequences or amino acid sequences. This motivates the study of problem variants that respect this inherent ordering. This paper addresses \textsc{Ordered (Induced) Subgraph Isomorphism} (O(I)SI) and its generalization, \textsc{Maximum Common Ordered (Induced) Subgraph} (MCO(I)S), which seek to find subgraph isomorphisms that preserve the vertex orderings of two given ordered graphs.
Our main contributions are threefold:
(1) We prove that these problems remain NP-complete even when restricted to small graph classes, such as trees of depth 2 and threshold graphs.
(2) We establish a gap in computational complexity between OSI and OISI on certain graph classes. For instance, OSI is polynomial-time solvable for interval graphs with their interval orderings, whereas OISI remains NP-complete under the same setting.
(3) We demonstrate that the tractability of these problems can depend on the vertex ordering. For example, while OISI is NP-complete on threshold graphs, its generalization, MCOIS, can be solved in polynomial time if the specific vertex orderings that characterize the threshold graphs are provided.
\keywords{Ordered (induced) subgraph isomorphism \and graph classes \and (in)tractability.}
\end{abstract}
%
%
\section{Introduction} 
 Given two graphs $G$ and $H$, \textsc{(Induced) Subgraph Isomorphism} asks whether $G$ contains a graph isomorphic to $H$ as an (induced) subgraph. 
 \textsc{Maximum Common (Induced) Subgraph}, which is a generalization of \textsc{(Induced) Subgraph Isomorphism}, is a problem to find, given two graphs $G$ and $H$, a graph $Z$ with the maximum number of edges (vertices) such that $Z$ is an (induced) subgraph of both $G$ and $H$. 
 These problems are NP-hard because they generalize many other fundamental and important NP-complete problems such as \textsc{Hamiltonian Path}, \textsc{Independent Set}, and \textsc{Clique}. 

For these problems, there are some tractability and intractability results for several cases where input graphs belong to subclasses of perfect graphs. 
First, if the smaller graph $H$ is disconnected, the problems are NP-complete even for disjoint unions of paths~\cite{Damaschke90,GJ79}.
Hence, when considering graph classes that include disjoint unions of paths, we particularly focus on the case where $H$ is connected.
Kijima et al.~\cite{KijimaOSU12} showed the NP-completeness of \textsc{Subgraph Isomorphism} on proper interval graphs, bipartite permutation graphs, and trivially perfect graphs. 
They also proposed polynomial-time algorithms for chain graphs, cochain graphs, and threshold graphs.
Konagaya et al.~\cite{KonagayaOU16} studied the complexity of the problem when $H$ belongs to a small class of graphs.
For \textsc{Induced Subgraph Isomorphism}, the problem is NP-complete on interval graphs~\cite{MarxS13} and cographs~\cite{Damaschke90}, while it can be solved in polynomial time for proper interval graphs, bipartite permutation graphs~\cite{HeggernesHMV15}, and threshold graphs~\cite{BelmonteHH12}. 

Many graph classes discussed above are intersection graphs of some geometric objects.
For example, an interval graph is the intersection graph of a family of intervals on the real line.
We can naturally define a vertex ordering (i.e., a total order on the vertex set) from the set of intervals. 
In this setting, isomorphism mappings are expected to be order-preserving; that is, a vertex $u$ precedes a vertex $v$ in $H$ if and only if $f(u)$ precedes $f(v)$ in $G$ for a subgraph isomorphism $f$ from $H$ to $G$.
Such a mapping is particularly relevant in pattern-matching problems on graphs where each vertex corresponds to an event in a time series.
As another application, proteins can be represented as sequences of amino acids. A graph constructed based on the distances between the carbon atom of each amino acid is called a protein contact network, and the resulting graph is an intersection graph of unit balls with an inherent vertex ordering~\cite{di2013protein}.
In these applications, we aim to perform pattern matching or compute similarity on graphs with vertex orderings.
Therefore, we address the problems \textsc{Ordered (Induced) Subgraph Isomorphism} (O(I)SI) and \textsc{Maximum Common Ordered  (Induced) Subgraph} (MCO(I)S): 
Given two graphs along with their vertex orderings, we aim to find an order-preserving subgraph isomorphism or an order-preserving maximum common subgraph (see Section 2 for the formal definition). 

Here, we discuss the computational complexity of the problems.
The problems O(I)SI are generally NP-complete because they include \textsc{Clique} due to the vertex symmetry in complete graphs.
On the other hand, when the two input graphs have the same number of vertices, both OISI and OSI can be trivially solved by providing vertex orderings.
Note that in this case of non-ordered problems, \textsc{Induced Subgraph Isomorphism} is equivalent to the graph isomorphism problem, whereas \textsc{Subgraph Isomorphism} includes \textsc{Hamiltonian Cycle} and is therefore NP-complete.
Bose et al.~\cite{BoseBL98} showed that the \textsc{Order Preserving Subsequence} (OPS) problem for two integer sequences is NP-complete. 
This problem is equivalent to OISI on permutation graphs with naturally defined vertex orderings of permutation graphs. 
In contrast, they also proposed a polynomial-time algorithm for cographs with the same vertex orderings.
Guillemot and Marx~\cite{GuillemotM14} presented a linear-time algorithm for OISI on permutation graphs with the orderings when the number of vertices in $H$ is constant. 
Heggernes et al.~\cite{HeggernesHMV15} provided a polynomial-time algorithm for OSI on interval graphs with interval orderings that characterize interval graphs as an exercise.
In the preliminary version of~\cite{HeggernesHMV15}, they~\cite{HeggernesMV10} also provided a polynomial-time algorithm for OISI on interval graphs with interval orderings, 
but they have been pointed out that it contains a bug in~\cite{HeggernesHMV15}. 
These studies have considered only the natural orderings that characterize the graph classes, without addressing arbitrary orderings.


\begin{figure}[tb]
    \centering
    \includegraphics[width = \textwidth]{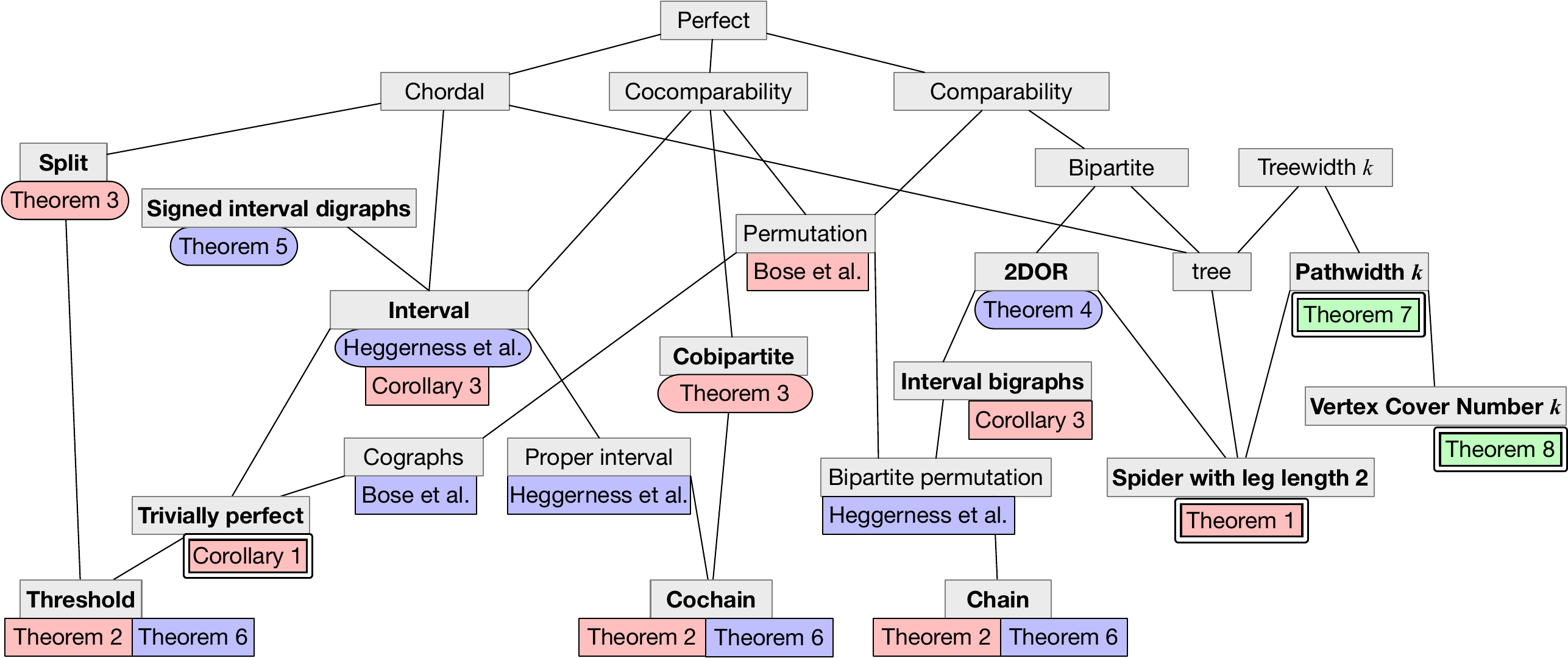}
    \caption{Illustration of the inclusion relations among graph classes along with the (in)tractability of OSI and OISI. Red, blue, and green denote NP-completeness, polynomial-time solvability, and fixed-parameter tractability, respectively. OSI, OISI, and both problems are represented by squashed rectangles, single-stroked rectangles, and double-stroked rectangles, respectively.}
    \label{fig:graphclasses}
\end{figure}
\noindent \textbf{Our contributions: }
In this paper, we discuss the O(I)SI problems and their generalization, MCO(I)S, with a focus not only on graph classes but also on the properties of orderings. 
The contributions of this paper are as follows, and they are illustrated in \figurename~\ref{fig:graphclasses}.

 We prove in \cref{sec:hard-tree} that both OSI and OISI are NP-complete, even on trees. The trees obtained in our reductions are spiders with leg length~2, which are two-directional orthogonal ray graphs (or 2DOR graphs) with pathwidth~2. 
 By a similar reduction, we can also show that both problems are NP-complete for trivially perfect graphs, a subclass of interval graphs.
 Furthermore, for OISI, we show NP-completeness for three smaller graph classes: threshold graphs, chain graphs, and cochain graphs in \cref{sec:hard-threshold}. 
 The result on threshold graphs (\cref{cor:interval}) implies that OISI remains NP-complete on interval graphs even if the vertex orderings are restricted to interval orderings, which indicates that a bug in the algorithm of~\cite{HeggernesMV10} is unlikely to be fixable.
 We also prove that OSI remains NP-complete even on split graphs with perfect elimination orderings and cobipartite graphs with cocomparability orderings in Section~\ref{sec:hard-split}. 
 

 For OSI, we propose polynomial-time algorithms for two graph classes related to interval graphs and interval bigraphs in Section~\ref{sec:alg-OSI}, provided that the vertex orderings are ``good'': One is the class of 2DOR graphs with comparability weak orderings, and the other is the class of signed-interval digraphs with min orderings.
 In Section~\ref{sec:hard-tree}, we show that for these graph classes, OSI is NP-complete (without restrictions on vertex orderings).
 This indicates that there is a difference in the complexity between OSI and OISI: For these graph classes, OSI can be solved in polynomial time using a ``good'' ordering, whereas OISI remains NP-complete even with such orderings~(\cref{cor:interval}).

 We propose algorithms for MCO(I)S, a generalization of O(I)SI.
 For MCOIS, we propose polynomial-time algorithms for threshold graphs, chain graphs, and cochain graphs with inclusion orderings in Section~\ref{sec:alg-threshold}. These graph classes can be characterized by inclusion orderings, and our algorithms are based on dynamic programming on these orderings. 
 For MCO(I)S, we propose fixed-parameter algorithms (FPT algorithms), parameterized by the pathwidth of the input vertex ordering in Section~\ref{sec:FPTalg-pathwidth}. Here, the pathwidth of a vertex ordering is defined based on an alternative characterization of pathwidth, known as the vertex separation number~\cite{Kinnersley92}.
 For these graphs, the problems are NP-complete on arbitrary orderings in Section~\ref{sec:hard-tree} and~\ref{sec:hard-threshold}. 
 These results suggest that the vertex orderings have a significant impact on the complexity of our problems. 
 Finally, we present FPT algorithms for MCO(I)S, parameterized by the vertex cover number of the input graph in Section~\ref{sec:FPTalg-VC}. 
 We note that the algorithm works on arbitrary vertex orderings and that the pathwidth of the input vertex ordering can be arbitrarily large even when the vertex cover number is bounded. 

\section{Definitions}
For a natural number $n$, we denote the set of integers $\{1, 2, \dots, n\}$ by $[n]$. 

Unless stated otherwise, graphs are assumed to be simple and undirected. 
A graph $G$ is an ordered pair $(V_G , E_G)$, where $V_G$ is the vertex set of $G$ and $E_G$ is the edge set of $G$. 
Let $n_G$ be the number of vertices of $G$.
The neighborhood of a vertex $v$ is the set $N(v) = \{u \in V_G \mid \{u, v\} \in E_G\}$, and the degree of $v$ is $|N(v)|$ denoted by $d(v)$.
If $d(v) = 1$, $v$ is a \emph{leaf}.
A sequence of pairwise distinct vertices $P = (v_1, v_2, \dots, v_k)$ is a \emph{path} from $v_1$ to $v_k$ of length $k-1$ in $G$ provided that $\{v_i, v_{i+1}\}\in E_G$ for $i \in [k-1]$.
A graph $G$ is connected if for any two vertices $u, v$, there exists a path from $u$ to $v$.
For a subset $V' \subseteq V_G$, the subgraph of $G$ induced by $V'$ is denoted as $G[V']$.
If every pair of vertices in $V'$ is adjacent (resp.\ non-adjacent) in $G$, $V'$ is called a \emph{clique} (resp.\ an \emph{independent set}).
A graph $G$ is \emph{bipartite} if the vertex set of $G$ can be partitioned into two independent sets. 
A graph $\overline{G} = (V_G, \overline{E_G})$ is called the \emph{complement} of $G$, where $\overline{E_G} = \{\{u, v\} \mid \{u, v\}\notin E_G\}$.
A graph $G$ is \emph{cobipartite} if $\overline{G}$ is bipartite. 

A vertex ordering of $G$ is a linear ordering $\sigma = (v_1, ... ,v_n)$ of the vertices of $G$.
We write $v_i \prec_\sigma v_j$ if $i < j$. 
A graph $H = (V_H, E_H)$ is \emph{subgraph-isomorphic} to a graph $G = (V_G, E_G)$ if there exists an injective map $f$ from $V_H$ to $V_G$ such that $\{u, v\} \in E_H$ implies $\{f(u), f(v)\} \in E_G$.
The map $f$ is called a \emph{subgraph isomorphism}. 
If the map $f$ satisfies $\{u, v\} \in E_H \Leftrightarrow \{f(u), f(v)\} \in E_G$, then $f$ is an \emph{induced subgraph isomorphism} and $H$ is \emph{induced subgraph-isomorphic} to $G$. 
Let $\sigma$ and $\tau$ be vertex orderings of $G$ and $H$, respectively.
An injective map $f\colon V_H \to V_G$ is $(\sigma, \tau)$-\emph{injective}, if $f(u) \prec_\sigma f(v)$, for every pair $u,v$ of vertices in $H$ with $u \prec_\tau v$.
For two graphs $G$ and $H$, if a $(\sigma, \tau)$-injective map $f$ is a (induced) subgraph isomorphism from $H$ to $G$, $f$ is an \emph{ordered (induced) subgraph isomorphism}. 
If there is an ordered (induced) subgraph isomorphism from $H$ to $G$, then $H$ is \emph{ordered (induced) subgraph-isomorphic} to $G$.
The problems addressed in this paper are defined as follows.

\medskip
\noindent \textsc{Ordered (Induced) Subgraph Isomorphism} (O(I)SI)\\
\textbf{Input}: Graphs $G=(V_G, E_G)$ and $H=(V_H, E_H)$ with vertex orderings $\sigma$ and $\tau$, respectively, where $|V_G| \ge |V_H|, |E_G| \ge |E_H|$.\\
\textbf{Question}: Is $H$ ordered (induced) subgraph-isomorphic to $G$?
\medskip

\noindent \textsc{Maximum Common Ordered (Induced) Subgraph} (MCO(I)S)\\
 \textbf{Input}: Graphs $G=(V_G, E_G)$ and $H=(V_H, E_H)$ with vertex orderings $\sigma$ and $\tau$, respectively, and a positive integer $k$.\\
 \textbf{Question}: Is there a graph $Z$ with $|E_Z|\geq k$ ($|V_Z|\geq k$) and its vertex ordering such that $Z$ is ordered (induced) subgraph-isomorphic to both $G$ and $H$. 
\medskip

Let a triple $(Z, f_G, f_H)$ be a solution of $G$ and $H$ for MCO(I)S, where $f_G$ and $f_H$ are ordered subgraph isomorphisms from $Z$ to $G$ and $H$, respectively. 
A vertex $v\in V_G$ is \emph{matched} with $u\in V_H$ if $f^{-1}_G(v) = f^{-1}_H(u)$.

\section{NP-completeness}
\subsection{O(I)SI on Spiders with Leg Length 2}
\label{sec:hard-tree}
We show that both OSI and OISI are NP-complete even for spiders with leg length 2. 
A tree is called a \emph{spider} if it has just one vertex of degree 3 or more, which is called the \emph{central vertex}.
Each path from the central vertex to a leaf is called a \emph{leg}.
Our reduction is from \textsc{Order Preserving Subsequence} (OPS), which asks, given two permutations $\pi$ over $[n]$ and $\rho$ over $[k]$ with $n \ge k$, whether there exists an index subset $\{i_1, i_2, \dots, i_k\}\subseteq [n]$ with $i_1 < i_2 < \dots < i_k$ such that for any $j_1, j_2\in [k]$, $\pi(i_{j_1}) < \pi(i_{j_2})$ if and only if $\rho(j_1) < \rho(j_2)$. 
This problem is known to be NP-complete~\cite{BoseBL98}.

\begin{thm}\label{thm:NPcomp_tree}
 OSI and OISI are NP-complete for spiders with leg length 2. 
\end{thm}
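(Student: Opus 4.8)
The plan is to reduce from \textsc{Order Preserving Subsequence} (OPS). Given permutations $\pi$ over $[n]$ and $\rho$ over $[k]$ (we may assume $k \ge 3$, as smaller instances are trivial, so that both graphs below are genuine spiders), I would build a spider $G$ from $\pi$ and a spider $H$ from $\rho$, each with leg length~$2$. For $G$, I create one leg per index $i \in [n]$, consisting of a \emph{middle} vertex $m_i$ adjacent to the central vertex $c$ and a \emph{leaf} $\ell_i$ adjacent to $m_i$; the spider $H$ is built from $\rho$ in the same way, with central vertex $c'$, middle vertices $m'_j$, and leaves $\ell'_j$. Membership of O(I)SI in NP is immediate, so the real work lies in the orderings.

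The central idea is to let the two ends of each leg carry the two coordinates of the point $(i,\pi(i))$. I would choose $\sigma$ so that all middle vertices come first in index order ($m_1 \prec_\sigma \dots \prec_\sigma m_n$), then $c$, then all leaves in \emph{value} order ($\ell_i \prec_\sigma \ell_{i'}$ exactly when $\pi(i) < \pi(i')$), and define $\tau$ analogously from $\rho$. Thus the relative order of the middle vertices encodes positions, while the relative order of the leaves encodes values.

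The key structural step is to argue that any ordered (induced) subgraph isomorphism $f$ from $H$ to $G$ must send legs to legs. Since $c'$ has $k \ge 3$ neighbors, $f(c')$ has degree at least $k$ in $G$, and $c$ is the unique vertex of degree $\ge 3$ in $G$; hence $f(c') = c$. Then every $m'_j$, being a neighbor of $c'$, maps to a neighbor of $c$, i.e.\ to some $m_{i_j}$, and $\ell'_j$, being the only remaining neighbor of $m'_j$, is forced to $\ell_{i_j}$. Now the order constraints finish the job: order preservation among the middle vertices yields $i_1 < \dots < i_k$, while order preservation among the leaves yields $\rho(j_1) < \rho(j_2) \Rightarrow \pi(i_{j_1}) < \pi(i_{j_2})$, which strengthens to the OPS biconditional because $\pi$ and $\rho$ are permutations. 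The converse direction is a routine check that the map induced by an OPS solution is both adjacency- and order-preserving.

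I expect the only point needing care to be verifying that the induced version (OISI) introduces no additional obstruction, so that a single construction settles both problems at once. This reduces to inspecting the non-edges of $H$ --- pairs of distinct middle vertices, pairs of distinct leaves, the center with any leaf, and a middle vertex with a foreign leaf --- and observing that their images in $G$ remain non-adjacent, using that the chosen indices $i_1, \dots, i_k$ are distinct. Consequently the reduction is insensitive to the induced/non-induced distinction, giving NP-completeness for both OSI and OISI on spiders with leg length~$2$.
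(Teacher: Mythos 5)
Your proposal is correct and takes essentially the same approach as the paper: both reduce from OPS by building leg-length-2 spiders whose middle vertices encode positions and whose leaves encode values (in the paper, the leaf on leg $i$ of $T_G$ is $v_{n+\pi(i)}$), forcing center-to-center via the degree argument and reading off the OPS witness from the order constraints, with the induced case handled by noting the reduction is insensitive to the distinction. The only cosmetic difference is the placement of the central vertex in the orderings (first in the paper, between middles and leaves in yours), which does not affect the argument.
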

\begin{proof}
 We construct two spiders $T_G$ and $T_H$ and their vertex orderings from the instance $\pi$ over $[n]$ and $\rho$ over $[k]$ of OPS.
 The vertex set of the spider $T_G$ is $V_G = \{v_0, v_1\dots,v_{2n}\}$ with ordering $\sigma = (v_0, v_1, \dots, v_{2n})$.
 The central vertex is $v_0$ and the legs are $(v_0,v_i,v_{n+\pi(i)})$ for $i \in [n]$. 
 Similarly, the spider $T_H$ has vertices $V_H = \{u_0, u_1\dots,u_{2k}\}$ with ordering $\tau = (u_0, u_1, \dots, u_{2k})$ and legs $(u_0,u_i,u_{k+\rho(i)})$ for $i \in [k]$. 
 We depict the spiders in \figurename~\ref{fig:reduce-tree}. 

 \begin{figure}[tb]
  \centering
  \includegraphics[width=.6\textwidth]{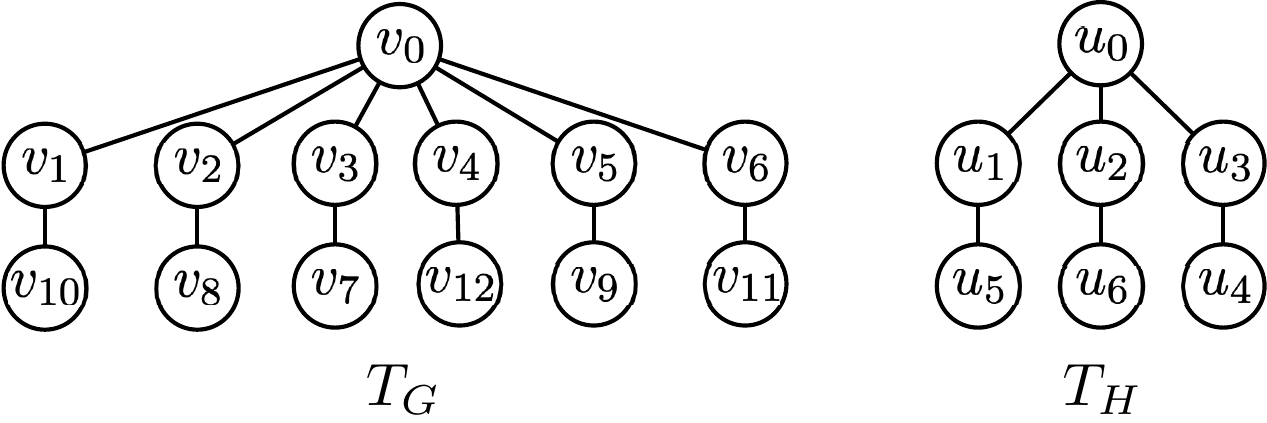}
  \caption{The reduced instance of O(I)SI from $\pi = (4, 2, 1, 6, 3, 5)$ and $\rho = (2, 3, 1)$ of \textsc{Order Preserving Subsequence}. The answer of \textsc{Order Preserving Subsequence} is $(4, 6, 3)$.}
  \label{fig:reduce-tree}
 \end{figure}

 We show that the instance $(\pi,\rho)$ of OPS is a ``yes'' instance if and only if $T_H$ is ordered subgraph-isomorphic to $T_G$.
 In the following, we only consider the non-induced case as every ordered subgraph isomorphism from $T_H$ to $T_G$ is an ordered induced subgraph isomorphism.

 Assume indices $i_1,\dots,i_k \in [n]$ with $i_1 < i_2 < \dots < i_k$ witness that $(\pi,\rho)$ is a ``yes'' instance of OPS:
 for any $j_1, j_2\in [k]$, $\pi(i_{j_1}) < \pi(i_{j_2})$ if and only if $\rho(j_1) < \rho(j_2)$. 
 We define a mapping $f\colon V_{H} \to V_{G}$ by $f(u_0) = v_0$ and $f(u_j) = v_{i_j}$ and $f(u_{k+\rho(j)}) = v_{n+\pi(i_j)}$ for $j\in [k]$. 
 Obviously, $f$ is a subgraph isomorphism from $T_H$ to $T_G$. 
 It remains to show the order consistency.
 Suppose $u_{j_1} \prec_\tau u_{j_2}$, i.e., $j_1 < j_2$.
 The only nontrivial case is when $j_1,j_2 > k$.
Let $j_1' = \rho^{-1}(j_1-k)$ and $j_2' = \rho^{-1}(j_2-k)$.
Then,
\begin{multline*}
  u_{j_1} \prec_\tau u_{j_2}
  \iff
  j_1 = k+\rho(j_1') < k+\rho(j_2') = j_2
  \iff
  \rho(j_1') < \rho(j_2')
\\  \iff
  \pi(i_{j_1'}) < \pi(i_{j_2'})
  \iff 
  f(u_{j_1}) = v_{n+\pi(i_{j_1'})} \prec_\sigma v_{n+\pi(i_{j_2'})} = f(u_{j_2})
  \,.
\end{multline*}

 Conversely, suppose that there exists an ordered subgraph isomorphism $f$ from $T_H$ to $T_G$. 
 Due to the degrees of $v_0$ and $u_0$, it follows that $f(u_0) = v_0$. 
 The neighbors $u_1, \dots, u_k$ of $u_0$ in $T_H$ must be mapped to some of the neighbors $v_1, \dots, v_n$ of $v_0$ in $T_G$.
 Let $i_j \in [n]$ be such that $f(u_j) = v_{i_j}$ for $j\in [k]$. 
 We show that those indices $i_1,\dots,i_k$ provide a witness for $(\pi,\rho)$ to be a ``yes'' instance of OPS.
 %
 If $j_1 < j_2$ for $j_1, j_2\in [k]$, $i_{j_1} < i_{j_2}$ follows from the orderings of $\sigma$ and $\tau$. 
 Additionally, $f(u_{k+\rho(j)}) = v_{n+\pi(i_{j})}$ since $u_{k+\rho(j)}$ and $v_{n+\pi(i_j)}$ are neighbors of $u_j$ and $v_{i_j}$ in $T_H$ and $T_G$, respectively. 
 Since $f$ is an ordered (induced) subgraph isomorphism, $v_{n+\pi(i_{j_1})}\prec_{\sigma} v_{n+\pi(i_{j_2})}$ if and only if $u_{k+\rho(j_1)}\prec_{\tau} u_{k+\rho(j_2)}$. 
 Therefore, it follows that $\pi(i_{j_1}) < \pi(i_{j_2})$ if and only if $\rho(j_1) < \rho(j_2)$ for any $j_1, j_2\in [k]$. 
\end{proof}
The graphs used in the proof of Theorem~\ref{thm:NPcomp_tree} are actually 2DOR graphs, defined in \cref{sec:alg-OSI}, with pathwidth 2.
This contrasts with the tractability results of Theorems~\ref{thm:algo 2DOR graphs} and~\ref{thm:FPT_pathdecomposition},
which demonstrate that if the given orderings are ``inherent'' to the graph structures, the problem may become easier to solve.

A graph is \emph{trivially perfect} if it contains neither an induced path of length four nor an induced cycle of length four~\cite{GraphClasses}.
A trivially perfect graph can be obtained from the graph used in the proof of \cref{thm:NPcomp_tree} by adding edges between the center vertex and the leaves. 
The same argument as in the proof applies to these graphs.
\begin{cor}
 OSI and OISI are NP-complete even on trivially perfect graphs. 
\end{cor}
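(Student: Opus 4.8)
The plan is to reuse the reduction from \textsc{Order Preserving Subsequence} (OPS) in the proof of \cref{thm:NPcomp_tree}, modifying only the two constructed graphs while keeping the orderings $\sigma$ and $\tau$ unchanged. Starting from the spiders $T_G$ and $T_H$, I add, for every leg, an edge between the central vertex and its leaf: concretely, $\{v_0, v_{n+\pi(i)}\}$ is added to $T_G$ for each $i \in [n]$ and $\{u_0, u_{k+\rho(j)}\}$ to $T_H$ for each $j \in [k]$. Call the resulting graphs $G'$ and $H'$. Each leg $(v_0, v_i, v_{n+\pi(i)})$ now becomes a triangle, so $G'$ is the friendship (windmill) graph $K_1 + (n\cdot K_2)$: the central vertex $v_0$ is universal and $G' - v_0$ is a perfect matching, and likewise for $H'$.

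First I would verify that $G'$ and $H'$ are trivially perfect. The cleanest route is the characterization that a graph is trivially perfect iff every connected induced subgraph has a universal vertex: for any connected induced subgraph $S$ of $G'$, if $v_0 \in S$ then $v_0$ is universal in $S$, and if $v_0 \notin S$ then $S$ is a connected induced subgraph of the matching $G' - v_0$, hence a single vertex or a single edge, which trivially has a universal vertex. (Equivalently, one checks directly that $G'$ contains neither an induced $P_4$ nor an induced $C_4$.) The same argument applies verbatim to $H'$.

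Next I would prove correctness, i.e.\ that $(\pi,\rho)$ is a ``yes''-instance of OPS iff $H'$ is ordered (induced) subgraph-isomorphic to $G'$. For the forward direction I reuse the map $f$ of \cref{thm:NPcomp_tree}; since it is a non-induced embedding, I only need to check that the three edge types of $H'$ — namely $\{u_0,u_j\}$, the added $\{u_0,u_{k+\rho(j)}\}$, and $\{u_j,u_{k+\rho(j)}\}$ — map to edges of $G'$, which is immediate. Moreover, exactly as before, every ordered subgraph isomorphism here is automatically induced: the image of a non-adjacent pair is a non-adjacent pair, because in $G'$ two ``left'' vertices (in $\{v_1,\dots,v_n\}$), two ``right'' vertices (in $\{v_{n+1},\dots,v_{2n}\}$), or a left vertex together with a non-matching right vertex are never adjacent. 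Hence OSI and OISI are handled simultaneously.

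For the converse, given an ordered subgraph isomorphism $f$, the degree of $u_0$ is $2k > 2$ (the cases $k \le 1$ are trivial), while every vertex of $G'$ other than $v_0$ has degree $2$, so $f(u_0) = v_0$ is forced. The one genuinely new point — and the main obstacle relative to \cref{thm:NPcomp_tree} — is that all remaining vertices of $H'$ are now neighbors of $u_0$, so I can no longer separate the ``middle'' vertices from the ``leaf'' vertices by adjacency to the center. I would resolve this using the ordering: the $2k$ images satisfy $f(u_1) \prec_\sigma \cdots \prec_\sigma f(u_{2k})$ and split into $k$ matching edges of $G'$, each having its smaller endpoint in $\{v_1,\dots,v_n\}$ and its larger endpoint in $\{v_{n+1},\dots,v_{2n}\}$; since every middle vertex precedes its own leaf in $\tau$ (as $j < k+\rho(j)$), order-preservation forces each middle vertex into $\{v_1,\dots,v_n\}$ and each leaf into $\{v_{n+1},\dots,v_{2n}\}$. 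Writing $f(u_j) = v_{i_j}$ with $i_j \in [n]$, the matching structure then forces $f(u_{k+\rho(j)}) = v_{n+\pi(i_j)}$, and I recover the witness $i_1 < \cdots < i_k$ together with the order-consistency $\pi(i_{j_1}) < \pi(i_{j_2}) \Leftrightarrow \rho(j_1) < \rho(j_2)$ exactly as in \cref{thm:NPcomp_tree}.
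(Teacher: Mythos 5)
Your proposal is correct and takes essentially the same approach as the paper: the paper's own proof of this corollary is precisely your construction (add edges between the central vertex and the leaves of the spiders from \cref{thm:NPcomp_tree}), followed by the remark that the same argument applies. You additionally spell out the one step where that argument genuinely needs repair---the neighborhood argument separating middle vertices from leaves breaks once every vertex is adjacent to the center---and your fix via the ordering and the matching structure is exactly the right one.
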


Although this paper is focused on connected graphs, we present here, as an exception, a result concerning disconnected graphs.
We used the central vertices of the spiders in the proof of Theorem~\ref{thm:NPcomp_tree} to make the graphs connected. 
The reduction still works if we remove the central vertices.
\begin{cor}
 OSI and OISI are NP-complete even for graphs consisting of disjoint edges. 
\end{cor}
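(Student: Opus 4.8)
The plan is to reuse the reduction from \textsc{Order Preserving Subsequence} (OPS) in the proof of Theorem~\ref{thm:NPcomp_tree}, but to delete the two central vertices $v_0$ and $u_0$. After deletion, $T_G$ becomes the graph $G'$ consisting of the $n$ disjoint edges $\{v_i, v_{n+\pi(i)}\}$ for $i \in [n]$, and $T_H$ becomes the graph $H'$ consisting of the $k$ disjoint edges $\{u_j, u_{k+\rho(j)}\}$ for $j \in [k]$; the orderings $\sigma$ and $\tau$ are inherited by simply dropping $v_0$ and $u_0$. Both graphs then consist of disjoint edges, as required. I would call a vertex \emph{low} if its index is at most $n$ (in $G'$) or at most $k$ (in $H'$), and \emph{high} otherwise. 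The two features I would exploit are that every edge joins a low vertex to a high vertex, and that in each ordering all low vertices precede all high vertices. Membership in NP is immediate, since a witnessing injection can be verified in polynomial time.

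For the forward direction I would take exactly the map $f$ from the proof of Theorem~\ref{thm:NPcomp_tree}, restricted to $V_{H'}$ (that is, forgetting the assignment $u_0 \mapsto v_0$). Since deleting the central vertices removes no constraint used there, the verification that $f$ is an ordered subgraph isomorphism is verbatim the one already given, so I would not repeat it.

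The reverse direction is where the proof genuinely differs, and this is the step I expect to be the main obstacle: without the central vertices the degree-based anchor $f(u_0) = v_0$ is gone, and I must instead recover the low/high partition from order-preservation alone. Concretely, given an ordered subgraph isomorphism $f$ from $H'$ to $G'$, I would first show that every low vertex $u_j$ (with $j \in [k]$) maps to a low vertex of $G'$. Suppose not, i.e., $f(u_j) = v_m$ with $m > n$. The unique neighbor of $v_m$ in $G'$ is the low vertex $v_{\pi^{-1}(m-n)}$, whose index is at most $n < m$; since $u_j$ is adjacent to $u_{k+\rho(j)}$, we would be forced to set $f(u_{k+\rho(j)}) = v_{\pi^{-1}(m-n)} \prec_\sigma v_m = f(u_j)$. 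But $u_j \prec_\tau u_{k+\rho(j)}$, contradicting that $f$ is order-preserving. Hence each $u_j$ maps to some low vertex $v_{i_j}$, and its neighbor $u_{k+\rho(j)}$ is forced onto the high partner $v_{n+\pi(i_j)}$. From here the extraction of the OPS witness—monotonicity $i_1 < \dots < i_k$ from the ordering of the low vertices, and the equivalence $\pi(i_{j_1}) < \pi(i_{j_2}) \iff \rho(j_1) < \rho(j_2)$ from the ordering of the high vertices—is identical to the corresponding part of the proof of Theorem~\ref{thm:NPcomp_tree}.

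Finally, I would observe that every ordered subgraph isomorphism here is automatically an \emph{induced} one: any two non-adjacent vertices of $H'$ map either to two low vertices, to two high vertices, or to a low vertex and a high vertex that do not form an edge of $G'$, and in all three cases the images are non-adjacent in $G'$ because distinct edges of $G'$ share no endpoint. Thus the single construction settles both OSI and OISI, exactly as in Theorem~\ref{thm:NPcomp_tree}.
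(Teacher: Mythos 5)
Your proof is correct and takes essentially the same approach as the paper: the paper's justification of this corollary is precisely the one-line observation that the reduction of Theorem~\ref{thm:NPcomp_tree} still works once the central vertices are deleted. Your order-preservation argument replacing the lost degree anchor $f(u_0)=v_0$ (forcing low vertices to map to low vertices via the position of each high vertex's unique neighbor) is exactly the detail the paper leaves implicit, and it is sound.
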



\subsection{OISI on Threshold/Chain/Cochain Graphs}
\label{sec:hard-threshold}
Let $G=(V_G, E_G)$ be a graph with a bipartition $(X, Y)$ of $V_G$. 
For a vertex $x\in X$, $N_Y(x)$ denotes the neighbor set of $x$ in $Y$, that is, $N_Y(x) = N(x)\cap Y$. 
For a vertex set $X'\subseteq X$, $N_Y(X') = \bigcup_{x\in X'} N_Y(x)$. 
The neighbor sets $N_X(y)$ for $y\in Y$ and $N_X(Y')$ for $Y'\subseteq Y$ are defined similarly.
An ordering $(x_1, x_2, \dots, x_{|X|})$ on $X$ is an \emph{inclusion ordering} if $N_Y(x_i) \subseteq N_Y(x_j)$ for every $i, j$ with $i < j$. 
Note that if there exists an inclusion ordering on $X$, then there also exists an inclusion ordering $(y_1, y_2, \dots, y_{|Y|})$ on $Y$ such that if $N_X(y_i) \subseteq N_X(y_j)$ for every $i, j$ with $i < j$. 
Suppose that $X$ (and hence $Y$) has an inclusion ordering in $G$. 
We say that $G$ is \emph{threshold} if $X$ is a clique and $Y$ is an independent set, 
$G$ is \emph{chain} if $G$ is bipartite, and $G$ is \emph{cochain} if $G$ is cobipartite. 
An \emph{inclusion ordering} of $G$ is defined as the inclusion ordering on $Y$ followed by the inclusion ordering on $X$.
\ifdefined\APP
\begin{restatable}[\restateref{thm:NPComp-Threshold}]{thm}{thmNPCompThreshold}\label{thm:NPComp-Threshold}
 OISI is NP-complete even for threshold graphs, chain graphs, and cochain graphs. 
\end{restatable}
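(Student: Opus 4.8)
The plan is to prove membership in \textbf{NP} and then give a polynomial-time reduction from \textsc{Order Preserving Subsequence} (OPS), the same source used for \cref{thm:NPcomp_tree}. Membership is immediate: an ordered induced subgraph isomorphism $f$ is a certificate checkable in polynomial time by verifying injectivity, order-preservation under $(\sigma,\tau)$, and that $f$ preserves both edges and non-edges. For hardness, given an OPS instance with $\pi$ over $[n]$ and $\rho$ over $[k]$ (assume $k\ge 2$), I would build $G$ from the ``text'' $\pi$ and $H$ from the ``pattern'' $\rho$ using a nested-neighborhood (\emph{level}) encoding of permutation values. For the chain case, let $X=\{x_1,\dots,x_n\}$, $Y=\{y_1,\dots,y_n\}$ and set $x_i \sim y_j$ iff $j\le\pi(i)$, so that $N_Y(x_i)=\{y_1,\dots,y_{\pi(i)}\}$ is nested according to the $\pi$-values and $G$ is a chain graph. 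Crucially, the chosen ordering is \emph{not} the inclusion ordering: take $\sigma=(y_1,\dots,y_n,x_1,\dots,x_n)$, i.e.\ the $Y$-block indexed by level followed by the $X$-block indexed by position. The pattern $H$ is built identically from $\rho$, with $x'_a\sim y'_b$ iff $b\le\rho(a)$ and $\tau=(y'_1,\dots,y'_k,x'_1,\dots,x'_k)$.

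Correctness hinges on two points. First, I would argue that every ordered induced subgraph isomorphism $f$ must send $Y'$ into $Y$ and $X'$ into $X$, \emph{preserving the respective orders}: since $\sigma$ lists $Y$ before $X$ and $f$ is order-preserving, the images form an increasing block, and since $Y$ and $X$ are independent while the pattern contains the edges $x'_ay'_b$ with $b\le\rho(a)$, mapping any vertex to the wrong side would force a monochromatic independent pair to be adjacent; using that $\rho$ is a permutation of $[k]$ this is impossible. Hence $f(x'_a)=x_{i_a}$ with $i_1<\dots<i_k$ and $f(y'_b)=y_{c_b}$ with $c_1<\dots<c_k$. Second, with these choices the induced cross-adjacency requirement $b\le\rho(a)\iff c_b\le\pi(i_a)$ (for all $a,b$) becomes, after setting $w_b:=\pi(i_{\rho^{-1}(b)})$, the demand for a strictly increasing sequence $c_1<\dots<c_k$ with $c_b\le w_b<c_{b+1}$, which is solvable if and only if $w_1<\dots<w_k$. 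This last inequality says exactly that $\rho(j_1)<\rho(j_2)$ implies $\pi(i_{j_1})<\pi(i_{j_2})$, i.e.\ the indices $i_1<\dots<i_k$ witness a ``yes'' instance of OPS. I would present both directions explicitly, taking $c_b=w_b$ for the forward direction.

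The threshold case uses the identical construction but makes $X$ (resp.\ $X'$) a clique; the graph stays threshold because all neighborhoods remain nested. The partition-preservation argument is the same with the roles of ``independent'' and ``clique'' exchanged in the forcing step: two pattern vertices of one part mapped to the clique side would be forced adjacent, bounding the number of wrong-side images by one, which $\rho$ being a permutation again rules out. The cochain case follows for free by complementation: the same $f$ is an ordered induced subgraph isomorphism from $\overline{H}$ to $\overline{G}$, and $\overline{G},\overline{H}$ are cochain graphs (two cliques with the complemented, still-nested cross structure), while $(\sigma,\tau)$-injectivity is unaffected.

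I expect the main obstacle to be the partition-preservation step---certifying that the ordering together with the induced non-edge constraints rigidly separates $X$ from $Y$ in every solution---since the rest is the clean numerical equivalence above and the complementation transfer. A secondary point, needed to derive the interval-graph corollary (\cref{cor:interval}), is to check that $\sigma$ and $\tau$ can be realized as interval orderings of the constructed threshold (hence interval) graphs.
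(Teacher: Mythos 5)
Your proposal is correct and takes essentially the same route as the paper: a reduction from OPS in which one side of the chain/threshold graph encodes positions, the other encodes permutation values via nested neighborhoods, the ordering lists one block entirely before the other, and order-preservation together with induced-ness recovers the OPS witness; the paper's construction differs only cosmetically (it inserts a ``mid'' separator vertex and uses suffix- rather than prefix-type neighborhoods, and it gets cochain graphs by making both sides cliques instead of by complementation). Your explicit partition-preservation argument and the interval-threading equivalence $c_b \le w_b < c_{b+1}$ correctly fill in the details that the paper delegates to ``the same discussion as Theorem~\ref{thm:NPcomp_tree}'', so no gap remains.
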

\else
\begin{thm}\label{thm:NPComp-Threshold}
 OISI is NP-complete even for threshold graphs, chain graphs, and cochain graphs. 
\end{thm}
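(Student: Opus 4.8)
The plan is to reduce from \textsc{Order Preserving Subsequence} (OPS), exactly the source used in \cref{thm:NPcomp_tree}, but now to encode the \emph{value} of each element as a \emph{degree} rather than as a pendant leaf; this is what forces the host graphs into the three target classes. Given an OPS instance $(\pi,\rho)$ with $\pi$ over $[n]$ and $\rho$ over $[k]$, I would first build a chain graph $G$ with bipartition $(X_G,Y_G)$, where $X_G=\{a_1,\dots,a_n\}$ and $Y_G=\{b_1,\dots,b_n\}$ are independent sets and $\{a_i,b_\ell\}\in E_G$ exactly when $\ell\le\pi(i)$. Thus $N_{Y_G}(a_i)=\{b_1,\dots,b_{\pi(i)}\}$ is a prefix whose length records the value $\pi(i)$, and nestedness of these prefixes certifies that $G$ is a chain graph; the pattern $H$ is built identically from $\rho$ on $\{a'_1,\dots,a'_k\}$ and $\{b'_1,\dots,b'_k\}$. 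The essential point is the choice of orderings: I take $\sigma=(a_1,\dots,a_n,b_1,\dots,b_n)$ and $\tau=(a'_1,\dots,a'_k,b'_1,\dots,b'_k)$, so each side is listed by its \emph{position} index and all of $X$ precedes all of $Y$. Because the neighbourhood prefixes grow with $\pi(i)$ rather than with $i$, these are \emph{not} inclusion orderings unless $\pi,\rho$ are already sorted, which is precisely the freedom that makes OISI hard here while MCOIS stays easy on inclusion orderings. The reduction is clearly polynomial, and membership in NP is immediate since a candidate map can be checked for injectivity, order-preservation, and the induced-edge condition in polynomial time.

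For the equivalence, an ordered induced subgraph isomorphism $f$ that respects the bipartition must send $a'_j\mapsto a_{i_j}$ and $b'_\ell\mapsto b_{g(\ell)}$; order-preservation within the $X$-block and within the $Y$-block forces $i_1<\dots<i_k$ and $g$ strictly increasing, while the all-$X$-before-all-$Y$ layout adds no cross constraint. The induced condition on a cross pair reads $\ell\le\rho(j)\iff g(\ell)\le\pi(i_j)$, and since $g$ is increasing the right-hand side cuts out a prefix of $[k]$, so equating the two prefixes gives $\lvert\{\ell:g(\ell)\le\pi(i_j)\}\rvert=\rho(j)$ for every $j$. Monotonicity of this count in $\pi(i_j)$ together with distinctness of the $\rho(j)$ then yields $\pi(i_{j_1})<\pi(i_{j_2})\iff\rho(j_1)<\rho(j_2)$, so $i_1<\dots<i_k$ is an OPS witness. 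Conversely, from an OPS witness I set $g(\ell)=\pi(i_{\rho^{-1}(\ell)})$, which is increasing precisely because the selected values follow the order of $\rho$, and a direct check shows the induced $f$ so obtained is order-preserving.

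The main obstacle, and the only delicate step, is proving that $f$ must respect the bipartition, i.e.\ $f(X_H)\subseteq X_G$ and $f(Y_H)\subseteq Y_G$; without it the cross-edge translation above is unavailable. For chain graphs this is clean: if some $a'_j$ mapped into $Y_G$ at $b_\ell$, its common neighbour $b'_1$ (dominated by every $a'_j$ and $\tau$-later) would map to a still-later vertex, hence into $Y_G$, adjacent to $b_\ell$, contradicting independence of $Y_G$; a symmetric argument using independence of $X_G$ excludes $f(Y_H)\cap X_G\neq\emptyset$. For threshold graphs I would reuse the same cross-edge skeleton but add all edges inside $X$ (making it a clique, as the definition requires); the proof that $f(X_H)\subseteq X_G$ survives verbatim, but the clique on $X_G$ now permits at most one $Y_H$-vertex—necessarily the $\tau$-first one $b'_1$, since any earlier $Y_H$-vertex forced into $X_G$ would contradict independence of $Y_H$—to slip into $X_G$. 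I would block this single crossing by inserting one dummy clique vertex in $X_H$ that is non-adjacent to all of $Y_H$, together with a value-$0$ counterpart in $X_G$: then $b'_1$ is non-adjacent to the dummy, so $f(b'_1)\in X_G$ would place two non-adjacent vertices inside the clique $X_G$, a contradiction. Finally, cochain graphs follow for free by complementation: complementing $G$ and $H$ turns the chain instance into a cochain instance, leaves $\sigma,\tau$ unchanged, and preserves ordered induced subgraph-isomorphism, because a map is an induced subgraph isomorphism between two graphs iff it is one between their complements.
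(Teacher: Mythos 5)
Your reduction is correct and, at its core, it is the paper's own: both reduce from \textsc{Order Preserving Subsequence} by putting the $n$ position vertices in a first block of the ordering and the $n$ value vertices in a second block, encoding the permutation through nested cross-neighborhoods (which is exactly what makes the host graphs threshold/chain/cochain), so that order-preservation forces position monotonicity and induced-ness forces value consistency. The execution differs in three harmless ways: the paper uses suffix-by-position nesting ($N(v_i)\cap Y_G=\{v_{n+\pi(j)} \mid j\ge i\}$) while you encode $\pi(i)$ as a degree via prefix nesting; the paper runs a single construction for all three classes by toggling which sides are cliques or independent sets, with a blocker vertex $v_{\mathrm{mid}}$ (no neighbors in $Y_G$) always present, while you handle chain directly with no extra vertex, threshold by adding a dummy, and cochain by complementation (a valid step, since a map is an induced subgraph isomorphism from $H$ to $G$ if and only if it is one from $\overline{H}$ to $\overline{G}$, and orderings are untouched); and your bipartition-respecting arguments are spelled out where the paper only gestures at the argument of Theorem~\ref{thm:NPcomp_tree}. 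One point needs pinning down in your threshold case: the contradiction ``two non-adjacent vertices inside the clique $X_G$'' needs $f(d_H)\in X_G$ in addition to $f(b'_1)\in X_G$, yet you never say where the dummy $d_H$ sits in $\tau$ (nor $d_G$ in $\sigma$). Place $d_H$ inside the $X_H$-block, say last in it (this is precisely the paper's $u_{\mathrm{mid}}$), and $d_G$ last in the $X_G$-block; then $d_H\prec_\tau b'_1$, so $f(b'_1)\in X_G$ forces $f(d_H)\prec_\sigma f(b'_1)$ and hence $f(d_H)\in X_G$ because $X_G$ is a prefix of $\sigma$, after which your induced-ness contradiction goes through, and the forward direction still maps $d_H\mapsto d_G$ consistently.
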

\fi

\ifdefined\FV
\ifdefined\APP
\thmNPCompThreshold*\label{thm:NPComp-Threshold*}
\fi

\begin{proof}
 \begin{figure}[tb]
  \centering
  \includegraphics[width=.7\textwidth]{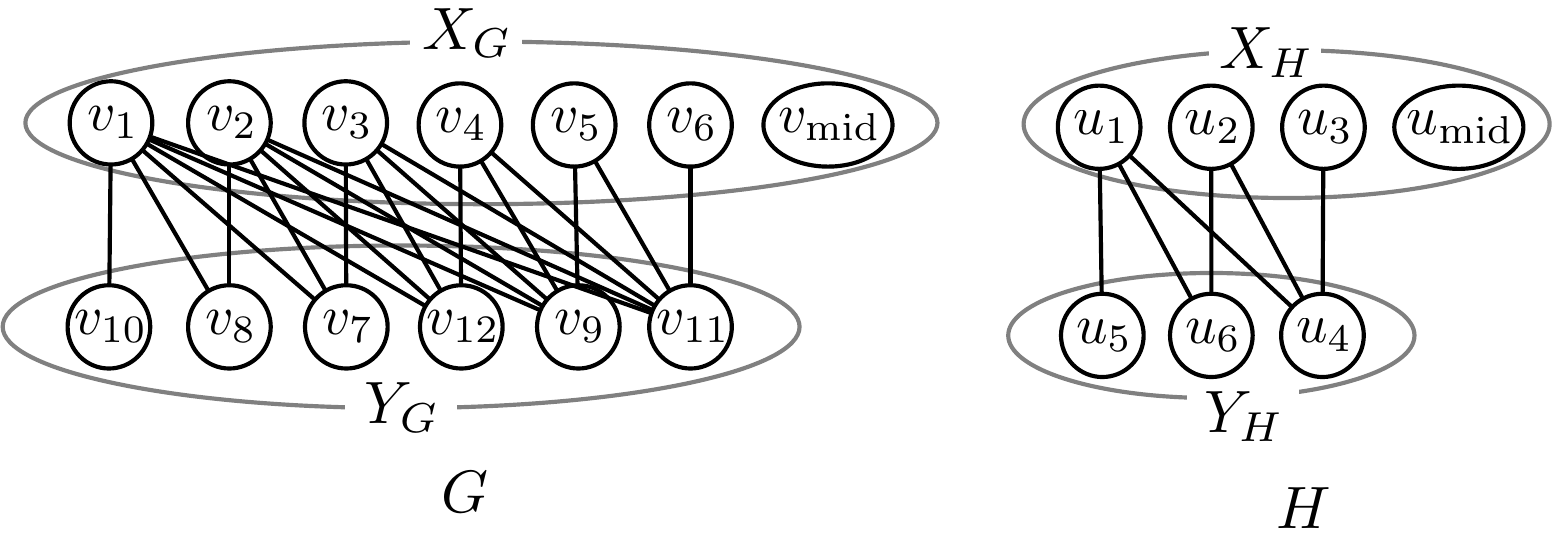}
  \caption{The reduced instance from $\pi = (4, 2, 1, 6, 3, 5)$ and $\rho = (2, 3, 1)$.}
  \label{fig:reduce-threshold}
 \end{figure}
 Let a pair $(\pi, \rho)$ be an instance of \textsc{Order Preserving Subsequence}, where $|\pi|=n$ and $|\rho|=k$. 
 The graph $G$ consists of $2n+1$ vertices, $V(G) = \{v_1, \dots, v_{2n}, v_{\mathrm{mid}}\}$, and its vertex ordering is defined as $\sigma = (v_1, \dots, v_n, v_{\mathrm{mid}}, v_{n+1}, \dots, v_{2n})$. 
 We partition the vertex set as $X_G = \{v_1, \dots, v_n, v_{\mathrm{mid}}\}$ and $Y_G = \{v_{n+1}, \dots, v_{2n}\}$. 
 For each vertex $v_i$ with $i\in [n]$, the neighbor set satisfies $N(v_i)\cap Y_G = \{v_{n+\sigma(i)}, v_{n+\sigma(i+1)}, \dots, v_{n+\sigma(n)}\}$, 
 and for the vertex $v_{\mathrm{mid}}$, we have $N(v_{\mathrm{mid}}) \cap Y_G = \emptyset$. 
 Note that the ordering $(v_{\mathrm{mid}}, v_n, \dots, v_1)$ is an inclusion ordering, that is, depending on whether $X_G$ and $Y_G$ are cliques or independent sets, the graph becomes a threshold graph, a chain graph, or a cochain graph.
 Similarly, the vertex set of graph $H$ is defined as $V(H) = \{u_1, \dots, u_{2k}, u_{\mathrm{mid}}\}$, with the vertex ordering given by $\tau = (u_1, \dots, u_k, u_{\mathrm{mid}}, u_{k+1}, \dots, u_{2k})$. 
 We partition the vertex set into $X_H = \{u_1, \dots, u_k, u_{\mathrm{mid}}\}$ and $Y_H = \{u_{k+1}, \dots, u_{2k}\}$. 
 For each vertex $u_i$ with $i\in [k]$, the neighbor set is defined by $N(u_i)\cap Y_H = \{u_{k+\sigma(i)}, u_{k+\sigma(i+1)}, \dots, u_{k+\sigma(k)}\}$, and for the vertex $u_{\mathrm{mid}}$, $N(u_{\mathrm{mid}}) \cap Y_H = \emptyset$. 
 See an example in \figurename~\ref{fig:reduce-threshold}. 

 Using the same discussion of the proof of Theorem~\ref{thm:NPcomp_tree}, we can show the following. 
 For a witness $(\pi(i_1), \pi(i_2), \dots, \pi(i_k))$ of \textsc{Order Preserving Subsequence}, we obtain an ordered subgraph isomorphism $f$: $f(u_{\mathrm{mid}}) = v_{\mathrm{mid}}$, and for $j\in [k], f(u_j) = v_{i_j}$ and $f(u_{k+\rho(j)}) = v_{n+\pi(i_j)}$. 
 On the other hand, we can construct a subsequence $(\pi(i_1), \pi(i_2), \dots, \pi(i_k))$ as a witness of \textsc{Order Preserving Subsequence} from an ordered induced subgraph isomorphism $f$ as follows: 
 for each $j\in [k]$, $f(u_{k+\rho(j)}) = v_{n+\sigma(j')}$ for $j'=\sigma(\pi(i_j))$. 

 We note that chain graphs are not connected because $u_{\mathrm{mid}}$ and $v_{\mathrm{mid}}$ are isolated vertices. 
 However, it does not need $v_{\mathrm{mid}}$ and $u_{\mathrm{mid}}$ since the bipartition is unique in a connected bipartite graph. 
 This ensures that $|\{f(u_{k+1}), \dots, f(u_{2k})\}\cap Y_G| = k$. 
 Therefore, we can also state NP-hardness for connected chain graphs. 
\end{proof}

\fi


\noindent
\textbf{Interval orderings and interval bigraph orderings:}
We here show the NP-completeness for OISI on interval graphs with interval orderings and interval bigraphs with interval bigraph orderings using the graphs described in the proof of Theorem~\ref{thm:NPComp-Threshold}. 
Let $G = (V_G, E_G)$ be a graph with $V_G$.
Then, $G$ is an \emph{interval graph} if there exists a set of intervals $\mathcal{I} = \{I_v \mid v\in V_G\}$ such that $I_u\cap I_v \neq \emptyset$ for $u, v\in V_G$ if and only if $\{u, v\}\in E_G$. 
A vertex ordering $\sigma$ of $G$ is an \emph{interval ordering} if for every triple $u, v, w$ in $V(G)$ with $u\prec_{\sigma} v$ and $v\prec_{\sigma} w$, $\{u, w\}\in E_G$ implies $\{v, w\}\in E_G$.
A graph $G$ is an interval graph if and only if it admits an interval ordering~\cite{Wood06,HeggernesMV10}. 
Note that the class of interval graphs is a superclass of threshold graphs~\cite{golumbic04}. 

A bipartite graph $G=(X_G\cup Y_G, E_G)$ is an \emph{interval bigraph} if there exists a set of intervals $\mathcal{I} = \{I_v \mid v\in X\cup Y\}$ such that $I_x\cap I_y \neq \emptyset$ for $x\in X_G, y\in Y_G$ if and only if $\{x, y\}\in E_G$. 
A vertex ordering $\sigma$ of $G$ is an \emph{interval bigraph ordering} if for every triple $u, v\in X_G$ and $w\in Y_G$ or $u, v\in Y_G$ and $w\in X_G$ with $u\prec_{\sigma} v$ and $v\prec_{\sigma} w$, and $\{u, w\}\in E_G$ implies $\{v, w\}\in E_G$.
A graph $G$ is an interval bigraph if and only if it admits an interval bigraph ordering~\cite{HellH04}. 
Note that the class of interval bigraphs is a superclass of chain graphs. 

\ifdefined\APP
\begin{restatable}[\restateref{cor:interval}]{cor}{corInterval}\label{cor:interval}
 OISI is NP-complete even on threshold graphs with interval orderings, and chain graphs with interval bigraph orderings. 
\end{restatable}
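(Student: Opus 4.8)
The instances already lie in the right classes: threshold graphs are interval graphs and chain graphs are interval bigraphs, so it suffices to equip the graphs $G$ and $H$ from the reduction of \cref{thm:NPComp-Threshold} with an interval ordering and an interval bigraph ordering, respectively, and to check that the equivalence with \textsc{Order Preserving Subsequence} (OPS) survives. Membership in NP is immediate, so the plan concerns only hardness.

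For the threshold case I would first fix an explicit interval representation of $G$: give the clique vertices $v_1,\dots,v_n,v_{\mathrm{mid}}$ intervals through a common point, and give each independent vertex $v_{n+m}$ a short interval placed so that it meets exactly the clique vertices to which it is joined by the half-graph $N(v_i)\cap Y_G=\{v_{n+\pi(i)},\dots,v_{n+\pi(n)}\}$. Listing the vertices by increasing right endpoint then yields an ordering in which, for every vertex $w$, the neighbours of $w$ preceding $w$ form a suffix of the predecessors of $w$; this is exactly the interval-ordering condition stated before the corollary. Applying the same idea to the two colour classes of the chain graph gives an interval bigraph ordering in the sense of \cite{HellH04}, and, as in the connected-chain remark of \cref{thm:NPComp-Threshold}, the isolated vertices $v_{\mathrm{mid}}$ and $u_{\mathrm{mid}}$ can be dropped.

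The heart of the argument, and the step I expect to be the main obstacle, is re-establishing correctness once $\sigma$ is replaced by such an interval ordering. Indeed $\sigma=(v_1,\dots,v_n,v_{\mathrm{mid}},v_{n+1},\dots,v_{2n})$ is itself \emph{not} an interval ordering, since the clique neighbours of an independent vertex $v_{n+m}$ are separated from it by non-neighbours and the suffix condition fails; sorting into an interval ordering therefore genuinely permutes $\sigma$. I must verify that this reordering does not destroy the permutation data of the OPS instance, i.e.\ that an order-preserving \emph{induced} subgraph isomorphism with respect to the interval orderings exists if and only if $(\pi,\rho)$ is a yes-instance. Here I would lean on the \emph{induced} requirement: it forces each chosen independent vertex of $H$ to be matched to an independent vertex of $G$ with exactly the same number of clique neighbours, which fixes its rank in the half-graph and hence recovers the value it encodes, while the clique/independent split still forces $u_{\mathrm{mid}}\mapsto v_{\mathrm{mid}}$ and clique-to-clique matching. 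Re-running the two directions of the proof of \cref{thm:NPComp-Threshold} with these ranks read off the interval ordering instead of off $\sigma$ should then reproduce the equivalence $\pi(i_{j_1})<\pi(i_{j_2})\iff\rho(j_1)<\rho(j_2)$, completing the reduction.
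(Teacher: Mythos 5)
Your plan breaks at exactly the step you yourself flag as the main obstacle, and it cannot be repaired along the lines you sketch. The threshold graph $G$ of \cref{thm:NPComp-Threshold}, viewed as an abstract graph, does not depend on $\pi$ at all: relabelling $v_{n+m}\mapsto y_{\pi^{-1}(m)}$ turns it into a fixed half-graph determined by $n$ alone, and the permutation lives entirely in the discrepancy between the ordering $\sigma$ and the nested-neighbourhood structure. Your interval ordering, however, is built from an interval representation constructed from the graph structure alone (clique intervals through a common point, each independent vertex placed according to its set of clique neighbours, i.e.\ according to its rank in the half-graph). Consequently the ordered graph $(G,\sigma')$ you obtain is, up to order-preserving isomorphism, the same for every $\pi$, and likewise $(H,\tau')$ is the same for every $\rho$; the resulting OISI instance depends only on $n$ and $k$, so it cannot be equivalent to the OPS instance $(\pi,\rho)$. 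The auxiliary claim you lean on is also false: an induced subgraph isomorphism preserves adjacency only among the \emph{selected} vertices, so a chosen independent vertex of $H$ may map to an independent vertex of $G$ with a much larger set of clique neighbours, as long as adjacencies to the chosen clique vertices match; degrees and ranks are not recovered. Worse, no cleverer choice of interval ordering of $G$ can rescue the approach: the interval-ordering condition forces the rank sequence of the independent vertices to be unimodal (since $v_1$ is adjacent to every independent vertex, at most one independent vertex may follow $v_1$; and once an independent vertex of higher rank precedes one of lower rank, every later independent vertex must have rank smaller still), so interval orderings of this graph simply cannot encode arbitrary permutations.

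The paper avoids all of this with a complementation trick rather than a reordering. It keeps the orderings $\sigma$ and $\tau$ from \cref{thm:NPComp-Threshold} untouched --- so the permutation data is intact --- and replaces $G$, $H$ by their complements $\overline{G}$, $\overline{H}$, which are again threshold graphs. One checks directly that $\sigma$ and $\tau$, while not interval orderings of $G$ and $H$, \emph{are} interval orderings of $\overline{G}$ and $\overline{H}$; and since a map is an induced subgraph isomorphism from $H$ to $G$ if and only if it is one from $\overline{H}$ to $\overline{G}$ (this is where inducedness is essential, and why the trick is specific to OISI and would fail for OSI), the yes/no answer is literally unchanged. The chain-graph case is handled in the same spirit, reusing the chain graphs of \cref{thm:NPComp-Threshold} with block-reversed orderings that are interval bigraph orderings. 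If you want to salvage your write-up, replace the reordering step by this complementation step.
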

\else
\begin{cor}\label{cor:interval}
 OISI is NP-complete even on threshold graphs with interval orderings, and chain graphs with interval bigraph orderings. 
\end{cor}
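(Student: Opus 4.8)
The plan is to reuse the graphs $G$ and $H$ built in the proof of \cref{thm:NPComp-Threshold}. Since threshold graphs are interval graphs and chain graphs are interval bigraphs, $G$ and $H$ already belong to the target classes, so the task reduces to equipping them with \emph{interval} orderings (resp.\ \emph{interval bigraph} orderings) under which the reduction from \textsc{Order Preserving Subsequence} (OPS) still decides the same instances. This separates into two subtasks: (i) exhibit orderings meeting the defining implication, and (ii) argue that the equivalence with OPS survives the change of ordering.

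The first observation I would make is that the block ordering $\sigma=(v_1,\dots,v_n,v_{\mathrm{mid}},v_{n+1},\dots,v_{2n})$ used in \cref{thm:NPComp-Threshold} is \emph{not} an interval ordering: because $Y_G$ is independent, any vertex $v_i\in X_G$ adjacent to a late $v_{n+b}\in Y_G$, together with an earlier $v_{n+a}\in Y_G$ lying between them, yields a triple $v_i\prec v_{n+a}\prec v_{n+b}$ that violates $\{v_i,v_{n+b}\}\in E\Rightarrow\{v_{n+a},v_{n+b}\}\in E$. I would therefore reorder each side by neighbourhood containment and fix the relative placement of the two sides according to the class. Verifying the interval (bigraph) condition is then a short case distinction on which sides $u,v,w$ lie in, with same-side triples handled by the nestedness of the neighbourhoods.

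For the \emph{chain} case this works cleanly. Keeping $X$ before $Y$, with $X$ ordered by increasing $N_Y$ and $Y$ ordered by value, gives an interval bigraph ordering: the interval bigraph condition only restricts same-side triples, namely $u,v\in X,\ w\in Y$ (handled by containment) and $u,v\in Y,\ w\in X$ (vacuous, since $X$ precedes $Y$), and it leaves the order within $Y$ free. Because the reversal of the $X$-order is applied to $G$ and $H$ at once, it is a consistent reindexing that still forces an increasing index set $i_1<\dots<i_k$, while ordering $Y$ by value makes order-preservation on $Y$ force $\pi(i_1),\dots,\pi(i_k)$ to be order-isomorphic to $\rho$. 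I would re-run both directions of the \cref{thm:NPComp-Threshold} argument to confirm that the witness $f(u_{\mathrm{mid}})=v_{\mathrm{mid}},\ f(u_j)=v_{i_j},\ f(u_{k+\rho(j)})=v_{n+\pi(i_j)}$ stays order-preserving and that, conversely, any order-preserving induced isomorphism returns an OPS witness; membership in NP is immediate.

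I expect the main obstacle to be the \emph{threshold} case, where the clique on $X$ forces the full interval-ordering condition on cross-side triples as well. Independence of $Y$ then rules out a block ordering, so the interval ordering is essentially forced to sort $Y$ by $N_X$; the real danger is that an ordering this tightly coupled to the adjacencies adds no constraint beyond an unordered induced isomorphism and thereby trivialises the reduction. The delicate step I anticipate is to choose the interval ordering --- exploiting the placement of $v_{\mathrm{mid}}$ and, where necessary, interleaving $X$ and $Y$ --- so that it simultaneously satisfies the interval-ordering implication and still isolates the value information encoding $\pi$ and $\rho$, keeping OISI equivalent to OPS.
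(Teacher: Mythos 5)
Your chain/interval-bigraph half is correct and is essentially the paper's own argument: reuse the chain graphs of \cref{thm:NPComp-Threshold} and reorder the $X$-sides by neighbourhood inclusion, consistently in $G$ and $H$ (your placement of $v_{\mathrm{mid}}$ at the position dictated by its empty neighbourhood is the right one); then the $u,v\in X,\ w\in Y$ triples follow from containment, the $u,v\in Y,\ w\in X$ triples are vacuous because $X$ precedes $Y$, and the block-wise reindexing preserves the equivalence with OPS.

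The threshold/interval half, however, has a genuine gap: you never exhibit the required ordering, and the route you defer to --- choosing an interval ordering of the \emph{original} threshold graphs $G,H$, possibly interleaving $X$ and $Y$, that still encodes $\pi$ and $\rho$ --- provably cannot work, for exactly the reason you flag as a ``danger''. In $G$ the set $Y_G$ is independent and $X_G$ is a clique, so any interval ordering must satisfy: (i) for $x\prec y\prec x'$ with $x,x'\in X_G$ and $y\in Y_G$, since $\{x,x'\}\in E_G$ automatically, $y$ must be adjacent to every $X_G$-vertex placed after it; and (ii) for $x\prec y\prec y'$ with $y,y'\in Y_G$, an edge $\{x,y'\}\in E_G$ would force $\{y,y'\}\in E_G$, which is impossible, so no neighbour of $y'$ may precede $y$. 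Combining (i) and (ii), $y\prec y'$ implies $N(y')\subseteq\{x\in X_G : y\prec x\}\subseteq N(y)$. Since the $Y_G$-neighbourhoods in the construction are $\{v_1,\dots,v_j\}$, pairwise distinct and totally ordered by inclusion, every interval ordering of $G$ is forced to place $Y_G$ in the unique order of reverse inclusion. But the abstract graph $G$ does not depend on $\pi$ at all --- $\pi$ was encoded only in the value order on $Y_G$ --- so under the forced ordering the ordered graph $(G,\sigma)$ is ordered-isomorphic for all $\pi$, the permutation information is erased, and the reduction trivialises. There is no room for the ``delicate choice'' you postpone.

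The idea your proposal misses is the paper's complementation trick: replace $G,H$ by $\overline{G},\overline{H}$, which are again threshold graphs and hence interval graphs. Because $Y$ is now a clique and $X$ an independent set, the \emph{original} block orderings $\sigma=(v_1,\dots,v_n,v_{\mathrm{mid}},v_{n+1},\dots,v_{2n})$ and $\tau$ of \cref{thm:NPComp-Threshold} are already interval orderings of $\overline{G}$ and $\overline{H}$ (the complemented cross-neighbourhoods $N_{\overline{G}}(v_i)\cap Y_G=\{v_{n+\pi(j)} : j<i\}$ grow along the ordering, and the clique $Y_G$ sits last). Moreover, a map is an induced subgraph isomorphism from $H$ to $G$ if and only if it is one from $\overline{H}$ to $\overline{G}$, and the orderings are unchanged, so the OISI answer is preserved verbatim with no re-verification of the reduction. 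Note that this trick is available precisely because the problem is the induced variant --- it would fail for OSI --- which is the conceptual point that working with $G,H$ directly cannot capture.
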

\fi

\ifdefined\FV
\ifdefined\APP
\corInterval*\label{cor:interval*}
\fi

\begin{proof}
 To show the NP-completeness for interval graphs, we construct complement graphs $\overline{G}$ and $\overline{H}$ of the threshold graphs $G$ and $H$ described in the proof of Theorem~\ref{thm:NPComp-Threshold}. 
 Both graphs $\overline{G}$ and $\overline{H}$ are threshold graphs, that is, they are interval graphs. 
 It is easy to check that the vertex orderings $\sigma$ and $\tau$ are both interval orderings. 
 The answer of the OISI on $\overline{G}$ and $\overline{H}$ is the same as $G$ and $H$ since every induced subgraph isomorphism from $H$ to $G$ is also an induced subgraph isomorphism from $\overline{H}$ to $\overline{G}$ and vice versa.

 For interval bigraphs with interval bigraph orderings, we use chain graphs $G$ and $H$ described in the proof. 
 We define the vertex orderings $\sigma = (v_n, v_{n-1}, \dots, v_1, v_{\mathrm{mid}}, v_{2n}, \dots, v_{n+1})$ and $\tau=(u_{k}, \dots, u_{1}, u_{\mathrm{mid}}, u_{2k}, \dots, u_{k+1})$. 
 The orderings are interval bigraph orderings, and the remainder of the proof is identical to the one in \cref{thm:NPComp-Threshold}. 
\end{proof}

\fi

\subsection{OSI on Split Graphs and Cobipartite Graphs}
\label{sec:hard-split}
An edge that joins two non-consecutive vertices in a cycle is called a chord.
A graph $G$ is \emph{chordal} if every cycle with length at least 4 has a chord in $G$.
A vertex ordering $\sigma = (v_1, \dots, v_n)$ is a \emph{perfect elimination ordering} of $G$ if $N(v_i)\cap \{v_{i+1}, \dots, v_n\}$ is a clique for all $i\in [n]$.
A graph $G$ is a chordal graph if and only if there is a perfect elimination ordering of $G$ \cite{GraphClasses}.
A graph $G$ is \emph{split} if the vertex set of $G$ can be partitioned into an independent set $I$ and a clique $K$.
A graph $G$ is a split graph if and only if $G$ and the complement of $G$ are chordal~\cite{GraphClasses}.

A graph $G$ is a \emph{cocomparability graph} if the complement $\overline{G}$ of $G$ has a comparability ordering (see \cref{sec:alg-OSI} for the definition).
A vertex ordering $\sigma = (v_1, \dots, v_n)$ is a \emph{cocomparability ordering} if for every triple $i, j, k\in [n]$ with $i<j<k$, $\{v_i, v_k\}\in E_G$ implies $\{v_i, v_j\}\in E_G$ or $\{v_j, v_k\}\in E_G$.
A graph $G$ is cocomparability if and only if it has a cocomparability ordering~\cite{Wood06}.


\ifdefined\APP
\begin{restatable}[\restateref{thm-NPcomp-split}]{thm}{thmNPCompSplit}\label{thm-NPcomp-split}
    OSI is NP-complete even on split graphs with perfect elimination orderings and cobipartite graphs with cocomparability orderings.
\end{restatable}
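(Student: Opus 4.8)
The plan is to reduce from \textsc{Order Preserving Subsequence} (OPS), following the strategy of \cref{thm:NPcomp_tree} and Theorem~\ref{thm:NPComp-Threshold}. Given $(\pi,\rho)$ over $[n]$ and $[k]$, I would build $G$ and $H$ in which one block of vertices plays the role of ``positions'' (ordered by index, from which an increasing subsequence is chosen) and a second block plays the role of ``values'' (ordered by value, whose relative order must match $\rho$), with the cross-adjacencies between the two blocks encoding $\pi$ and $\rho$ as in the earlier reductions. For the split case I would make one block a clique $K$ and the other an independent set $I$ and take $\sigma=(I,K)$; for the cobipartite case I would make both blocks cliques $A,B$ and take $\sigma=(A,B)$, the concatenation of the two clique orderings. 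The ``yes $\Rightarrow$ embedding'' direction is then immediate, exactly as in \cref{thm:NPcomp_tree}.

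Checking that these orderings have the required type is the routine part. In the split construction every independent vertex has its whole neighbourhood inside $K$, so its later-neighbours form a clique, while a clique vertex has only clique vertices after it; hence $\sigma=(I,K)$ is a perfect elimination ordering. In the cobipartite construction, for any triple $v_i\prec_\sigma v_j\prec_\sigma v_k$, if $v_j\in A$ then $v_i\in A$ and $\{v_i,v_j\}$ is a clique edge, while if $v_j\in B$ then $v_k\in B$ and $\{v_j,v_k\}$ is a clique edge; so the cocomparability condition holds irrespective of the cross-adjacencies. Thus the structural constraints on the ordering are met ``for free'' by the clique blocks.

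The hard part will be soundness, and this is exactly where OSI diverges from OISI. Since the embedding need only be non-induced, \emph{any} ordered pattern embeds into a clique by mapping $\tau$-order to $\sigma$-order; hence if $\omega(G)\ge |V_H|$ the instance is a trivial ``yes'', and, worse, even a partial collapse lets the ``value'' vertices of $H$ be absorbed into clique vertices of $G$ lying before the images of the ``position'' block, short-circuiting the intended correspondence. This is precisely the collapse that induced-ness forbids in Theorem~\ref{thm:NPComp-Threshold} and \cref{cor:interval} (two non-adjacent pattern vertices cannot both land inside a clique), and I expect it to be the central obstacle. I would therefore have to engineer the gadget so that $\omega(G)<|V_H|$ while still encoding OPS, and so that the only order-preserving embeddings are the honest ones — for instance by keeping the clique small and using anchoring/forcing vertices that pin the clique of $H$ onto a \emph{prefix} of the clique of $G$, leaving no clique vertex before the position images. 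Note that the rigidity of these classes blocks the easy reuse of earlier constructions (split graphs contain no induced $2K_2$, so the disjoint-edges variant after \cref{thm:NPcomp_tree} is unavailable, and cobipartite graphs have independence number at most two, while $\omega(G)\ge |V_G|/2$ forces the narrow regime $|V_G|<2|V_H|$); controlling the clique size against collapse is thus the real work of the proof.
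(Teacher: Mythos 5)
Your plan is incomplete at exactly the point you flag as ``the real work of the proof,'' and the paper avoids that work entirely by choosing a different hardness source. You propose to reduce from OPS as in \cref{thm:NPcomp_tree} and Theorem~\ref{thm:NPComp-Threshold}, and you correctly diagnose that for the \emph{non-induced} problem this runs into the collapse obstacle: an order-preserving (not necessarily induced) embedding can absorb pattern vertices into clique vertices of $G$, destroying the correspondence between OPS witnesses and embeddings. But you never exhibit the gadget that prevents this; you only sketch desiderata (``keep the clique small,'' ``anchoring/forcing vertices''). Since OPS hardness lives entirely in the rigidity of the two orderings, and clique collapse erases precisely that rigidity, it is not at all clear such a gadget exists; in any case the proposal does not contain a proof of the soundness direction, which is the whole theorem.

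The paper's proof instead reduces from \textsc{Balanced Biclique}: given bipartite $G=(X_G\cup Y_G,E_G)$ and $k$, it builds a split graph $G'$ by making $X_G\cup\{u_G\}$ a clique and attaching $n_G$ pendant vertices $L_G$ to $u_G$, with perfect elimination ordering $(Y_G,L_G,u_G,X_G)$; the pattern $H'$ is built the same way from $K_{k,k}$ with ordering $(Y_H,L_H,u_H,X_H)$. Degree forces $u_H\mapsto u_G$, the $L_H$ vertices must go to $L_G$, and then the ordering forces $Y_H$ into $Y_G$ and $X_H$ into $X_G$; an ordered subgraph isomorphism exists iff $G$ contains a balanced biclique $K_{k,k}$. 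The cobipartite case is analogous with padding cliques $X'_G,Y'_G$ and a universal vertex. The key insight you are missing is that for OSI one should pick a source problem whose hardness is \emph{orthogonal} to vertex ordering: because the two sides of a biclique consist of pairwise symmetric vertices, the order constraints within each side are vacuous, so the collapse phenomenon that kills an OPS encoding is harmless here --- indeed it is what makes the ``yes'' direction immediate. Your reduction-from-OPS route would, if completed, be a genuinely different (and harder) argument; as it stands, it is a plan with its central construction missing, not a proof.
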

\else
\begin{thm}\label{thm-NPcomp-split}
    OSI is NP-complete even on split graphs with perfect elimination orderings and cobipartite graphs with cocomparability orderings.
\end{thm}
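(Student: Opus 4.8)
The plan is to reduce from \textsc{Order Preserving Subsequence} (OPS), reusing the position/value encoding of \cref{thm:NPcomp_tree,thm:NPComp-Threshold} but realizing it inside a split graph (resp.\ a cobipartite graph) whose given ordering can be certified to be a perfect elimination ordering (resp.\ a cocomparability ordering). For the split-graph part, from an instance $(\pi,\rho)$ of OPS I would build split graphs $G$ and $H$ in which one OPS ``layer'' (the positions) sits in the independent set and the other (the values) is attached to the clique $K_G$, so that an order-preserving subgraph isomorphism selects an increasing index set $i_1<\dots<i_k$ and the induced order among the value images witnesses the OPS condition, exactly as in the proof of \cref{thm:NPcomp_tree}. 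The ordering $\sigma$ is chosen to list the independent-set vertices before the clique vertices; since in a split graph every independent-set vertex is simplicial, eliminating those first and then the clique shows that $\sigma$ is a perfect elimination ordering. The forward direction (an OPS witness yields an ordered subgraph isomorphism) is routine and mirrors the computation displayed in \cref{thm:NPcomp_tree}; NP membership is immediate, since one can guess the injection and verify order- and edge-preservation in polynomial time.

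The main obstacle is the reverse direction, and it is genuinely caused by working with the non-induced problem OSI rather than OISI. Because the clique of a split graph is complete, \emph{any} graph on at most $|K_G|$ vertices embeds into it order-preservingly (all edge constraints are satisfied vacuously inside a complete clique), which would make the instance a trivial ``yes'' independently of $(\pi,\rho)$. I therefore have to prevent $H$ from collapsing into the clique of $G$: the construction must guarantee $|V_H|>\omega(G)$, the clique number of $G$. I would achieve this either by padding the OPS instance (prepending or appending a common increasing run to both $\pi$ and $\rho$, chosen so that $|V_H|$ overtakes the clique size while the OPS answer is preserved) or by attaching to both graphs an anchor clique that is forced onto $K_G$ and thereby saturates it. Once the clique cannot absorb all of $H$, the order constraints pin the position vertices into the independent set and the value vertices onto their clique neighbours, and the adjacency pattern recovers the OPS witness as in \cref{thm:NPComp-Threshold}. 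Verifying that this saturation (or padding) really forces the intended layer structure, rather than some mixed mapping that still exploits clique completeness, is the delicate step I expect to spend the most effort on.

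For the cobipartite part I would use the same encoding in the complementary regime: $G$ is the union of two cliques $A$ and $B$ whose cross \emph{non-}edges encode $\pi$ (and similarly for $H$), so that the only way an $H$-edge can fail is by landing on a missing cross pair, which is precisely what enforces the OPS condition. Listing $A$ before $B$ always yields a cocomparability ordering: the complement of a cobipartite graph is bipartite, hence a comparability graph, and orienting all its (cross) edges from $A$ to $B$ is transitive, so the $A$-then-$B$ order is a linear extension and certifies the cocomparability ordering without extra work. The anti-collapse issue reappears here as well, now with $A$, $B$, or a spanning clique potentially swallowing $H$, and I would dispatch it with the same size/anchor argument, after which the reverse direction proceeds as in the split case.
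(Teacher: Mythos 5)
Your overall strategy differs from the paper's, and it contains a genuine gap that your own proposed fixes do not close. The paper does \emph{not} reduce from OPS here; it reduces from \textsc{Balanced Biclique}, precisely because an OPS-style reduction needs the vertex ordering to do selection work, which is incompatible with non-induced embeddings into dense graphs. Concretely, your two anti-collapse mechanisms fail for opposite reasons. (1)~Padding so that $|V_H|>\omega(G)$ only excludes mapping \emph{all} of $H$ into the clique; it does nothing against mixed mappings in which a position vertex of $H$ and its matched value vertex both land inside $K_G$, where their edge is satisfied vacuously and the OPS condition on that pair evaporates. One can even show the natural counting variants fail: to block a single $H$-vertex from sliding into $K_G$ you need the suffix of $\tau$ after it to be larger than the suffix of $\sigma$ after $K_G$ begins, and this inequality is incompatible with the intended embedding fitting at all. (2)~Saturation is worse: if $H$ carries a clique of size exactly $|K_G|$ that is forced onto $K_G$, then, since an injective order-preserving map between two chains of equal length is the \emph{unique} order isomorphism, the mapping on the entire clique side becomes rigid. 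But in a split graph all pattern information is carried by clique--independent-set edges, so rigidity on the clique side also pins the independent-set images, and the instance degenerates into a condition checkable in polynomial time --- no freedom remains to select the subsequence $i_1<\dots<i_k$, so the reduction cannot encode an NP-hard problem. Slack in the clique (needed for selection) and saturation of the clique (needed against collapse) are mutually exclusive, and this tension is exactly why the paper proves OISI, not OSI, hard for threshold/chain/cochain graphs in \cref{thm:NPComp-Threshold}, and why \cref{thm:NPcomp_tree} only works because its trees are so sparse that every ordered subgraph isomorphism is automatically induced.

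The paper's actual proof sidesteps all of this by choosing a source problem whose hardness does not rely on order at all. From a \textsc{Balanced Biclique} instance $(G,k)$ it builds a split graph with clique $X_G\cup\{u_G\}$, independent set $Y_G\cup L_G$ (the block $L_G$ of pendant vertices on $u_G$), and ordering $(Y_G,L_G,u_G,X_G)$; the pattern is built the same way from $K_{k,k}$. The degree of $u_H$ forces $f(u_H)=u_G$, the $L$-blocks together with the ordering pin $Y_H$ into $Y_G$ and $X_H$ into $X_G$, and then --- this is the key sentence of the paper --- since the vertices within $X_H$ and within $Y_H$ are mutually symmetric, the ordering constraints within each layer are vacuous, so an ordered embedding exists if and only if $G$ contains $K_{k,k}$. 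Your cobipartite sketch inherits the same unresolved collapse/rigidity problem (cross edges of $H$ can be absorbed by within-clique edges of $G$), whereas the paper again resolves it with the symmetric $K_{k,k}$ pattern, padded cliques $X'_G,Y'_G$, and the universal vertex $u_G$. Your observation that independent-set-first orderings are perfect elimination orderings, and that the $A$-then-$B$ ordering of a cobipartite graph is a cocomparability ordering, is correct and matches the paper; the missing idea is the switch to an order-insensitive source problem.
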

\fi

\ifdefined\FV
\ifdefined\APP
\thmNPCompSplit*\label{thm-NPcomp-split*}
\fi

\begin{proof}
We give reductions from \textsc{Balanced Biclique}~\cite{GJ79}: Given a bipartite graph $G = (X_G\cup Y_G, E_G)$ and an integer $k$, the goal is to decide whether there is a biclique $A\cup B$ in $G$ such that $A\subseteq X_G, B\subseteq Y_G$, and $|A| = |B| = k$.
Since the vertices in $A$ and $B$ are symmetric, the orderings within $A$ and $B$ are negligible. 
We exploit this property in our reductions. 

\noindent
{\bf Split graphs with perfect elimination orderings: } Let $(G, k)$ with bipartite graph $G = (X_G\cup Y_G, E_G)$ and integer $k$ be an instance of \textsc{Balanced Biclique}.
    From the graph~$G$, we construct a split graph $G'$ as follows.
    We add a vertex $u_G$ and edges between every pair of vertices in $X_G \cup \{u_G\}$, forming a clique.
    Moreover, we add $n_G$ vertices that are adjacent to $u_G$ only, and the set of these vertices is denoted by $L_G$.
    The graph $G'$ is a split graph with a clique $X_G \cup \{u_G\}$ and an independent set $Y_G \cup L_G$.
    The vertex ordering $\sigma$ is defined as $(Y_G, L_G, u_G, X_G)$, where the internal orderings within $Y_G$, $L_G$, and $X_G$ are determined arbitrarily.
    This ordering is a perfect elimination ordering of $G'$ as the neighborhood of each vertex in $Y_G\cup L_G$ is a clique and the remaining $u_G\cup X_G$ is indeed a clique in $G'$.
    We construct another split graph $H'$ from a complete bipartite graph $H \coloneqq K_{k, k}$ with vertex set $X_H \cup Y_H$ and a perfect elimination ordering $\tau = (Y_H, L_H, u_H, X_H)$ in a similar manner.

    When $H'$ is subgraph-isomorphic to $G'$, the vertex $u_H$ must be mapped to $u_G$ based on its degree, and each vertex in $L_H$ must be mapped to a vertex in $L_G$.
    This implies that each vertex in $Y_H$ and in $X_H$ must be mapped to a vertex in $Y_G$ and in $X_G$, respectively due to the orderings $\sigma$ and $\tau$.
    Since vertices in $X_H$ (and those in $Y_H$) are indistinguishable, we can conclude that there is an ordered subgraph isomorphism from $H'$ to $G'$ if and only if $G$ has a balanced biclique $K_{k, k}$.

\noindent
{\bf Cobipartite graphs with cocomparability orderings: }
    The reduction is similar to the previous one.
    Let $(G, k)$ be the instance of \textsc{Balanced Biclique}.
    Starging from $G = (X_G \cup Y_G, E_G)$, we add two sets of vertices $X'_G$ and $Y'_G$ with $|X'_G| = |Y'_G| = n_G$ such that both $X_G \cup X'_G$ and $Y_G \cup Y'_G$ induce cliques.
    Moreover, we add a vertex $u_G$ that is adjacent to all the vertices.
    We define a vertex ordering $\sigma = (Y'_G, Y_G, u_G, X_G, X'_G)$, where the innner orderings within $Y'_G$, $Y_G$, $X_G$, and $X'_G$ are determined arbitrarily.
    Observe that the graph $G'$ and its ordering $\sigma$ are a cocomparability graph and a cocomparability ordering, respectively.
    We construct another cocomparability graph $H'$ from a complete bipartite graph $H \coloneqq K_{k, k}$ with vertex set $X_H \cup Y_H$ and a cocomparability ordering $\tau = (Y'_H, Y_H, u_H, X_H, X'_H)$ in a similar manner.

    By the same argument as used for split graphs, the vertex $u_H$ must be mapped to $u_G$ based on its degree, each vertex in $X_H$ (resp.~$Y_H$) must be mapped to $X_G$ (resp.~$Y_G$), and each vertex in $X'_H$ (resp.~$Y'_H$) must be mapped to $X'_G$ (resp.~$Y'_G$).
    Due to the intistinghishability of vertices, we can conclude that there is an ordered subgraph isomorphism from $H'$ to $G'$ if and only if $G$ has a balanced biclique $K_{k, k}$ as well.
\end{proof}

\fi

\section{Algorithms}
\subsection{OSI on 2DOR Graphs and Signed-interval Digraphs}
\label{sec:alg-OSI}
As stated in the introduction,
Heggernes et~al. \cite{HeggernesHMV15}
implicitly presented a polynomial-time algorithm 
for \textsc{OSI} 
on interval graphs with interval orderings. 
Building on this algorithm, 
we present polynomial-time algorithms for 
two-directional orthogonal ray graphs (or 2DOR graphs) 
and signed-interval digraphs in this section.
These three graph classes share a common ordering characterization, known as min orderings \cite{HHMR20-SIDMA}, 
and 2DOR graphs and signed-interval digraphs can be seen 
as bigraph and digraph analogs of 
interval graphs, respectively; see also \cite{HHLM20-SIDMA}.
When interval graphs are viewed as reflexive and symmetric digraphs, 
where each vertex has a loop and 
each edge $\{u, v\}$ is replaced by two opposite arcs $(u, v)$ and $(v, u)$, 
they form a subclass of signed-interval digraphs.
Furthermore, the class of signed-interval digraphs 
includes co-TT graphs (i.e., complements of threshold tolerance graphs) and adjusted-interval digraphs 
as subclasses \cite{HHMR20-SIDMA}, 
implying that the following algorithms can be readily adapted to show that 
OSI is solvable in polynomial time 
for co-TT graphs with proper orderings \cite{MRT88-JGT} and adjusted-interval digraphs with min orderings \cite{FHHR12-DAM}.
%
\ifdefined\APP
Due to space constraints, 
the algorithm for signed-interval digraphs and the proof of 
Theorem~\ref{thm:algo signed-interval digraphs} 
are omitted (see Appendix~\ref{sec:apx_min_orderable}).
\fi

\paragraph*{2DOR graphs with comparability weak orderings}
A bipartite graph $G = (V_G, E_G)$ with bipartition $(X_G, Y_G)$ is called 
a \emph{two-directional orthogonal ray graph} 
(or \emph{2DOR graph}) \cite{STU10-DAM} if there exist 
a family of rightward rays (i.e., half-lines) $R_x$ for $x \in X_G$ and 
a family of downward rays $R_y$ for $y \in Y_G$ 
in the plane such that 
for any $x \in X_G$ and $y \in Y_G$, 
$\{x, y\} \in E_G$ if and only if $R_x$ intersects $R_y$. 
2DOR graphs were originally introduced to facilitate 
the defect-tolerant design of nano-circuits \cite{STU09-ISCAS} and 
are equivalent to several other graph classes studied in various contexts, 
including complements of circular-arc graphs with clique cover number two \cite{Spinrad88-JCTSB,FHH99-Comb} 
and interval containment bigraphs \cite{Huang06-JCTSB}; 
see \cite{STU10-DAM,HHMR20-SIDMA} for further details. 

The class of 2DOR graphs admits a characterization 
known as a weak ordering \cite{STU10-DAM}. 
A pair of linear orders, $\sigma_X$ on $X_G$ and $\sigma_Y$ on $Y_G$, 
is called a \emph{weak ordering} if 
for every $x_1, x_2 \in X_G$ and $y_1, y_2 \in Y_G$ 
with $x_1 \prec_{\sigma_X} x_2$ and $y_1 \prec_{\sigma_Y} y_2$, 
$\{x_1, y_2\} \in E_G$ and $\{x_2, y_1\} \in E_G$ imply $\{x_1, y_1\} \in E_G$. 
A bipartite graph is a 2DOR graph if and only if 
it admits a weak ordering. 
In this paper, we also refer to a vertex ordering $\sigma$ of $G$ 
as a weak ordering if 
the pair of suborderings induced by $X_G$ and $Y_G$ forms a weak ordering of $G$. 
Such orderings have also been discussed in~\cite{Huang18-DAM}. 

Now, we introduce the comparability weak ordering. 
A vertex ordering $\sigma$ of a graph $G = (V_G, E_G)$ 
is called a \emph{comparability ordering} if 
for every triple $u, v, w \in V_G$ with $u \prec_{\sigma} v \prec_{\sigma} w$, 
$\{u, v\} \in E_G$ and $\{v, w\} \in E_G$ imply $\{u, w\} \in E_G$; 
see \cite{Damaschke90-incollection,Wood06} and \cite[Section 7.4]{GraphClasses}. 
For a bipartite graph $G$, we refer to its vertex ordering 
as a \emph{comparability weak ordering} 
if it is both a comparability ordering and a weak ordering of $G$. 
Note that every 2DOR graph admits a comparability weak ordering, 
which is given by the concatenation of $\sigma_X$ and $\sigma_Y$. 
Moreover, a bipartite graph is a 2DOR graph 
if and only if it admits a comparability weak ordering. 

The input to \textsc{OSI} 
consists of two bipartite graphs $G$ and $H$ 
with comparability weak orderings $\sigma = (v_1, \dots, v_{n_G})$ and $\tau = (u_1, \dots, u_{n_H})$, respectively. 
Our algorithm determines whether $H$ is ordered subgraph-isomorphic to $G$ and, if so, outputs an ordered subgraph isomorphism $f$. 
It proceeds in the following three steps.
\begin{enumerate}
\item 
Initialize an order-preserving mapping $f$ by setting $f(u_i) \coloneqq v_{i}$ for each $i$ with $1 \le i \le n_H$.
\item 
If $f$ is a subgraph isomorphism from $H$ to $G[\{f(u_i) \mid 1 \le i \le n_H\}]$, 
output $f$ and stop. Otherwise, proceed to the next step. 
\item 
Let $u_p$ and $u_q$ be an arbitrary pair of vertices of $H$
with $p < q$ such that $\{u_p, u_q\} \in E_H$ but $\{f(u_p), f(u_q)\} \notin E_G$; 
such indices exist as $f$ is not a subgraph isomorphism. 
Let $v_{p'} = f(u_p)$ and $v_{q'} = f(u_q)$. 
We consider the following two cases:
\begin{description}
\item[Case~1:]
If $\{v_k, v_{q'}\} \notin E_G$ for any $k$ with $p' < k < q'$, 
update $f$ as follows:
\begin{enumerate}
\item If $q'+1 \le n_G$, set $f(u_q) \coloneqq v_{q'+1}$; otherwise, output No and stop.
\item If $f(u_{q+1}) = v_{q'+1}$, update $q \coloneqq q+1$ and $q' \coloneqq q'+1$, then return to Step~3(a) in Case~1. Otherwise, return to Step~2.
\end{enumerate}
\item[Case~2:]
If there exists some $k$ with $p' < k < q'$ 
such that $\{v_k, v_{q'}\} \in E_G$, 
update $f$ as follows:
\begin{enumerate}
\item If $p'+1 \le n_G$, set $f(u_p) \coloneqq v_{p'+1}$; otherwise, output No and stop.
\item If $f(u_{p+1}) = v_{p'+1}$, update $p \coloneqq p+1$ and $p' \coloneqq p'+1$, then return to Step~3(a) in Case~2. Otherwise, return to Step~2.
\end{enumerate}
\end{description}
\end{enumerate}

\begin{thm}\label{thm:algo 2DOR graphs}
\textsc{OSI} 
on 2DOR graphs with comparability weak orderings 
can be solved in $\mathrm{O}(n_Gn_H^3 + n_G^2n_H)$ time, 
where $n_G = |V_G|$ and $n_H = |V_H|$. 
\end{thm}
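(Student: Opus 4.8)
The plan is to establish the two things the statement requires of the displayed procedure: that it is \emph{correct} (it returns an ordered subgraph isomorphism whenever one exists and returns No otherwise) and that it runs within $\mathrm{O}(n_Gn_H^3 + n_G^2n_H)$ time. I would hang the whole correctness argument on two invariants maintained by the algorithm. The first is that $f$ stays \emph{order-preserving}, $f(u_1)\prec_\sigma\cdots\prec_\sigma f(u_{n_H})$; this holds at initialization and is restored after every update precisely by the chaining in Step~3(b), which pushes $u_{q+1}$ (resp.\ $u_{p+1}$) one step right whenever it would collide with the just-moved image. The second, which is the heart of the matter, is the conditional \emph{domination invariant}: for \emph{every} ordered subgraph isomorphism $g$ from $H$ to $G$, if one exists, we have $f(u_i)\preceq_\sigma g(u_i)$ for all $i$ at every moment. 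This holds initially because any order-preserving injection $g$ forces $g(u_i)\succeq_\sigma v_i=f(u_i)$.

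Granting the domination invariant, correctness follows cleanly. Soundness is immediate: if the algorithm halts in Step~2, its stopping condition is exactly that $f$ is an ordered subgraph isomorphism. A No-output is triggered only when some image would have to advance past $v_{n_G}$; since the preservation argument (below) will have shown that any hypothetical $g$ needs $g(u_q)\succeq_\sigma v_{q'+1}$ (resp.\ $g(u_p)\succeq_\sigma v_{p'+1}$), and no such vertex exists, no $g$ can exist, so No is correct. Thus everything reduces to showing each rightward move preserves domination. In Case~1 I would argue by contradiction: if $g(u_q)=v_{q'}$, then from $g(u_p)\succeq_\sigma v_{p'}$, order-preservation, and $\{g(u_p),v_{q'}\}\in E_G$ one locates $g(u_p)=v_k$ with $p'\le k<q'$; the case $k=p'$ contradicts $\{v_{p'},v_{q'}\}\notin E_G$, while $p'<k<q'$ contradicts the Case~1 hypothesis that no interior $v_k$ is adjacent to $v_{q'}$. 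Hence $g(u_q)\succeq_\sigma v_{q'+1}$, as needed; the chaining is justified because $g(u_{q+1})\succ_\sigma g(u_q)$ forces the next image one step further right as well.

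The delicate step, and the one I expect to be the main obstacle, is Case~2, where advancing $f(u_p)$ must be justified. Assuming $g(u_p)=v_{p'}$, the edge $\{u_p,u_q\}$ forces $\{v_{p'},g(u_q)\}\in E_G$ with $g(u_q)=v_m$ for some $m>q'$ (since $\{v_{p'},v_{q'}\}\notin E_G$), while the Case~2 hypothesis supplies some $v_k$ with $p'<k<q'$ and $\{v_k,v_{q'}\}\in E_G$. This is exactly where both properties of a comparability weak ordering must be used together. Fixing the bipartition side of $v_{p'}$ and splitting on the two side-assignments of the pair $\{v_k,v_{q'}\}$, the weak-ordering axiom applied to $v_{p'}\prec_\sigma v_k$ and $v_{q'}\prec_\sigma v_m$ yields in one subcase $\{v_{p'},v_{q'}\}\in E_G$ directly, and in the other subcase $\{v_{p'},v_k\}\in E_G$, after which the comparability axiom on $v_{p'}\prec_\sigma v_k\prec_\sigma v_{q'}$ again gives $\{v_{p'},v_{q'}\}\in E_G$. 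Either way this contradicts $\{v_{p'},v_{q'}\}\notin E_G$, so $g(u_p)\neq v_{p'}$, i.e.\ $g(u_p)\succeq_\sigma v_{p'+1}$, preserving domination.

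For the running time I would use a straightforward counting argument. Since images only move rightward inside $\{v_1,\dots,v_{n_G}\}$ and there are $n_H$ of them, the total number of single increments is $\mathrm{O}(n_Gn_H)$, which also yields termination. Each entry into Step~3 performs one case test that scans the indices $k$ with $p'<k<q'$ in $\mathrm{O}(n_G)$ time, and there are $\mathrm{O}(n_Gn_H)$ such entries, contributing $\mathrm{O}(n_G^2n_H)$. Each return to Step~2 (plus the accompanying broken-edge search) examines the $\mathrm{O}(n_H^2)$ edges of $H$, and there are $\mathrm{O}(n_Gn_H)$ such returns, contributing $\mathrm{O}(n_Gn_H^3)$; the per-increment bookkeeping in the chaining is dominated by these. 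Summing gives the claimed bound $\mathrm{O}(n_Gn_H^3 + n_G^2n_H)$.
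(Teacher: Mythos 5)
Your proposal is correct and follows essentially the same route as the paper's proof: the paper also shows that the algorithm's mapping never overtakes any existing ordered subgraph isomorphism (phrased contrapositively, as a contradiction at the first iteration where some image passes the solution's), handles the chained shifts the same way, and uses the identical Case~2 analysis combining the comparability and weak-ordering axioms, with the same accounting of $\mathrm{O}(n_Gn_H)$ iterations of Steps~2 and~3 for the running time. Your explicit ``domination invariant'' formulation and its move-by-move preservation is just the direct-induction version of the paper's argument, not a different method.
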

\begin{proof}
In the algorithm, let $f_h$ denote the mapping from $V_H$ to $V_G$ after the $h$th iteration of Step~3 (i.e., before the $(h+1)$th iteration of Step~2).
The mapping $f_h$ is injective and order-preserving at each iteration.
To prove the theorem, it suffices to show that 
the algorithm finds an ordered subgraph isomorphism $f$ 
from $H$ to $G$ if one exists. 
Suppose that the algorithm fails to find $f$.
Then, there exists an index $h$ such that 
$f_h(u_i) \preceq_\sigma f(u_i) $ for every $i$ whereas 
$f(u_j) \prec_\sigma f_{h+1}(u_j)$ for some $j$.
Since Step~3 is executed in the $(h+1)$th iteration, 
there must be a pair of vertices $u_p$ and $u_q$ of $H$ 
with $p < q$ such that 
$\{u_p, u_q\} \in E_H$ and $\{f_h(u_p), f_h(u_q)\} \notin E_G$. 
Let $v_{p'} = f_h(u_p)$ and $v_{q'} = f_h(u_q)$. 
\par
Suppose that Case~1 is executed in the $(h+1)$th iteration of Step~3. 
We first prove that $f(u_q) = f_h(u_q)$. 
Since $f_h(u_j) \preceq_\sigma f(u_j) \prec_\sigma f_{h+1}(u_j)$ 
and the algorithm shifts only one vertex per iteration, 
we have $f(u_j) = f_h(u_j)$. 
If $q = j$, the claim holds. 
Otherwise (i.e., $q < j$), 
since $f_{h+1}(u_{j-1}) = f_h(u_j)$ and $f(u_{j-1}) \prec_\sigma f(u_j)$, 
we obtain $f(u_{j-1}) \prec_\sigma f_{h+1}(u_{j-1})$, 
which again implies $f(u_{j-1}) = f_h(u_{j-1})$. 
By applying this argument iteratively, we derive
$f(u_{j-2}) = f_h(u_{j-2})$, 
$f(u_{j-3}) = f_h(u_{j-3})$, $\cdots$, 
and finally, $f(u_q) = f_h(u_q)$. 
However, 
when Case~1 is executed, 
$\{v_k, v_{q'}\} \notin E_G$ for any $k$ with $p' \le k < q'$, 
whereas $\{f(u_p), f(u_q)\} \in E_G$ and 
$f_h(u_p) \preceq_\sigma f(u_p) \prec_\sigma f(u_q) = f_h(u_q)$, 
a contradiction. 
\par
Suppose that Case~2 is executed in the $(h+1)$th iteration of Step~3. 
As in Case~1, we have $f(u_p) = f_h(u_p)$. 
Since Case~2 is executed, 
there exists an index $k$ with $p' < k < q'$ such that 
$\{v_k, v_{q'}\} \in E_G$. 
By the property of comparability orderings, 
$\{v_{p'}, v_{q'}\} \notin E_G$ implies $\{v_{p'}, v_k\} \notin E_G$. 
Thus, graph $G$ contains the pattern shown in Figure~\ref{fig:2dorg:Case2}. 
Without loss of generality, 
assume $f(u_p) \in X_G$ and $f(u_q) \in Y_G$. 
If $v_k \in X_G$ and $v_{q'} \in Y_G$, then 
$\{v_{p'}, v_{q'}\} \notin E_G$ contradicts the property of weak orderings: $\{f(u_p), f(u_q)\} \in E_G$ and $\{v_k, v_{q'}\} \in E_G$ imply $\{v_{p'}, v_{q'}\} \in E_G$. 
Similarly, 
if $v_k \in Y_G$ and $v_{q'} \in X_G$, then 
$\{v_{p'}, v_k\} \notin E_G$ contradicts the property of weak orderings. 
Therefore, the algorithm finds 
the ordered subgraph isomorphism $f$.
\begin{figure}[t]
    \centering\begin{tikzpicture}
\input{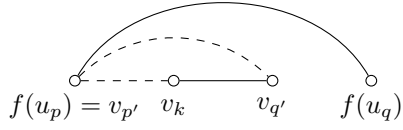}
\draw [dashed] (x) to (y);
\draw [dashed] (x) to [out=45, in=135] (z);
\draw [] (x) to [out=60, in=120] (w);
\draw [] (y) to (z);
\end{tikzpicture}
    \caption{The pattern appearing when Case~2 is executed in the $(h+1)$th iteration of Step~3. (2DOR graphs)}
    \label{fig:2dorg:Case2}
\end{figure}
\par
Steps~1 and 2 run in $\mathrm{O}(n_H)$ and $\mathrm{O}(n_H^2)$ time, respectively.
Step~3 requires $\mathrm{O}(n_G)$ time to decide which case to execute.
Since Steps~2 and 3 can be repeated at most $n_Gn_H$ times, 
the algorithm runs in $\mathrm{O}(n_Gn_H^3 + n_G^2n_H)$ time in total.
\end{proof}

We note that a close analysis of the proof shows that 
the algorithm works correctly even if
the vertex ordering $\tau$ of graph $H$ is arbitrary.

\paragraph*{Signed-interval digraphs with min orderings}
A directed graph (or \emph{digraph}) $G = (V_G, E_G)$, 
which may contain loops and opposite arcs, 
is called a \emph{signed-interval digraph} \cite{HHMR20-SIDMA} if 
it admits a vertex ordering $\sigma = (v_1, \dots, v_{n_G})$ such that 
for any two arcs $(v_i, v_j), (v_{i'}, v_{j'}) \in E_G$, 
there exists an arc $(v_{\min\{i, i'\}}, v_{\min\{j, j'\}}) \in E_G$. 
Such an ordering is called a \emph{min ordering} of $G$. 

\ifdefined\FV
The input to \textsc{OSI} 
consists of two digraphs $G$ and $H$ 
with min orderings $\sigma = (v_1, \dots, v_{n_G})$ and $\tau = (u_1, \dots, u_{n_H})$, respectively. 
The algorithm determines whether $H$ is ordered subgraph-isomorphic to $G$ and, if so, outputs an ordered subgraph isomorphism $f$. 
It proceeds in the following three steps.
\begin{enumerate}
\item 
Initialize an order-preserving mapping $f$ by setting $f(u_i) \coloneqq v_{i}$ for each $i$ with $1 \le i \le n_H$.
\item 
If $f$ is a subgraph isomorphism from $H$ to $G[\{f(u_i) \mid 1 \le i \le n_H\}]$, 
output $f$ and stop. Otherwise, proceed to the next step. 
\item 
Let $u_p$ and $u_q$ be an arbitrary pair of vertices of $H$ 
with $p \le q$ such that at least one of the following holds:
\begin{itemize}
\item $(u_p, u_q) \in E_H$ and $(f(u_p), f(u_q)) \notin E_G$, or 
\item $(u_q, u_p) \in E_H$ and $(f(u_q), f(u_p)) \notin E_G$.
\end{itemize}
Such indices exist as $f$ is not a subgraph isomorphism. 
Note that $p = q$ is possible, as the graph may have loops. 
Let $v_{p'} = f(u_p)$ and $v_{q'} = f(u_q)$. 
We consider the following two cases:
\begin{description}
\item[Case~1:] If either of the following conditions holds:
\begin{itemize}
\item 
$(v_k, v_{q'}) \notin E_G$ for any $k$ with $p' < k < q'$ 
if $(u_p, u_q) \in E_H$ and $(v_{p'}, v_{q'}) \notin E_G$, or 
\item 
$(v_{q'}, v_k) \notin E_G$ for any $k$ with $p' < k < q'$ 
if $(u_q, u_p) \in E_H$ and $(v_{q'}, v_{p'}) \notin E_G$, 
\end{itemize}
update $f$ as follows:
\begin{enumerate}
\item If $q'+1 \le n_G$, set $f(u_q) \coloneqq v_{q'+1}$; otherwise, output No and stop.
\item If $f(u_{q+1}) = v_{q'+1}$, update $q \coloneqq q+1$ and $q' \coloneqq q'+1$, then return to Step~3(a) in Case~1. Otherwise, return to Step~2.
\end{enumerate}
\item[Case~2:] If either of the following conditions holds:
\begin{itemize}
\item 
there exists some $k$ with $p' < k < q'$ 
such that $(v_k, v_{q'}) \in E_G$ 
if $(u_p, u_q) \in E_H$ and $(v_{p'}, v_{q'}) \notin E_G$, or 
\item 
there exists some $k$ with $p' < k < q'$ 
such that $(v_{q'}, v_k) \in E_G$
if $(u_q, u_p) \in E_H$ and $(v_{q'}, v_{p'}) \notin E_G$, 
\end{itemize}
update $f$ as follows:
\begin{enumerate}
\item If $p'+1 \le n_G$, set $f(u_p) \coloneqq v_{p'+1}$; otherwise, output No and stop.
\item If $f(u_{p+1}) = v_{p'+1}$, update $p \coloneqq p+1$ and $p' \coloneqq p'+1$, then return to Step~3(a) in Case~2. Otherwise, return to Step~2.
\end{enumerate}
\end{description}
\end{enumerate}

\fi

\ifdefined\APP
\begin{restatable}[\restateref{thm:algo signed-interval digraphs}]{thm}{thmAlgoSignedIntervalDigraphs}\label{thm:algo signed-interval digraphs}
\textsc{OSI} 
on signed-interval digraphs with min orderings 
can be solved in $\mathrm{O}(n_Gn_H^3 + n_G^2n_H)$ time, 
where $n_G = |V_G|$ and $n_H = |V_H|$. 
\end{restatable}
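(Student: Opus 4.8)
The plan is to reuse the template of the proof of \cref{thm:algo 2DOR graphs} almost verbatim, since the stated algorithm is the directed analog of the one for 2DOR graphs; the only genuinely new ingredient is the use of the min-ordering property in place of the weak-ordering/comparability argument. First I would record the invariants preserved by the greedy updates: writing $f_h$ for the map obtained after the $h$th single-position shift performed in Step~3, the map $f_h$ is injective and order-preserving, and every update only advances a vertex forward. These follow directly from the update rules of Case~1, Case~2, and the consecutive block-shift in Step~3(b), exactly as in the undirected case, and together with the termination bound they justify correctness of the ``No'' output.

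The heart of the argument is a first-overshoot analysis. Suppose a valid ordered subgraph isomorphism $f$ exists but the algorithm reports No. I would take the least $h$ such that $f_h(u_i) \preceq_\sigma f(u_i)$ for all $i$ while $f(u_j) \prec_\sigma f_{h+1}(u_j)$ for some $j$, and let $u_p, u_q$ with $p \le q$ be the violating pair selected in the $(h+1)$st execution of Step~3, with $v_{p'} = f_h(u_p)$ and $v_{q'} = f_h(u_q)$. Because Step~3(b) shifts only a consecutive block, each vertex by a single position, the same backward-propagation used for 2DOR (reading off $f(u_j) = f_h(u_j)$ from $f_h(u_j) \preceq_\sigma f(u_j) \prec_\sigma f_{h+1}(u_j)$ and propagating down the block) shows that $f(u_q) = v_{q'}$ when Case~1 is executed and $f(u_p) = v_{p'}$ when Case~2 is executed.

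The case analysis then splits by the orientation of the offending arc, but its structure mirrors the undirected proof. In Case~1 no min ordering is needed: extending the case condition to $k = p'$ using the failure $(v_{p'}, v_{q'}) \notin E_G$ (resp.\ $(v_{q'}, v_{p'}) \notin E_G$), the image $f(u_p) = v_{p''}$ lands with $p' \le p'' < q'$, and $(f(u_p), f(u_q)) \in E_G$ (resp.\ $(f(u_q), f(u_p)) \in E_G$) directly contradicts the case condition. The loop case $p = q$ is subsumed here: then $p' = q'$, Case~1 applies vacuously, and $f(u_q) = v_{q'}$ forces $(v_{q'}, v_{q'}) \in E_G$, contradicting the failure. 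Case~2 is exactly where min ordering does the work: there is some $k$ with $p' < k < q'$ and, say, $(v_k, v_{q'}) \in E_G$, while $f(u_q) = v_{q''}$ with $q'' \ge q'$ yields $(v_{p'}, v_{q''}) \in E_G$; applying the min-ordering property to the arcs $(v_{p'}, v_{q''})$ and $(v_k, v_{q'})$ produces $(v_{\min\{p', k\}}, v_{\min\{q'', q'\}}) = (v_{p'}, v_{q'}) \in E_G$, contradicting $(v_{p'}, v_{q'}) \notin E_G$; the oppositely oriented subcase is symmetric, using the arcs $(v_{q''}, v_{p'})$ and $(v_{q'}, v_k)$.

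I expect the only real obstacle to be bookkeeping rather than conceptual: matching each orientation of the offending arc to the correct instance of the min-ordering axiom so that the two coordinatewise minima collapse precisely to $p'$ and $q'$, and confirming that loops are handled entirely within Case~1. The running time is inherited verbatim from \cref{thm:algo 2DOR graphs}: Steps~1--2 cost $\mathrm{O}(n_H^2)$, selecting the case in Step~3 costs $\mathrm{O}(n_G)$, and since positions only advance there are at most $n_G n_H$ shifts, for a total of $\mathrm{O}(n_G n_H^3 + n_G^2 n_H)$.
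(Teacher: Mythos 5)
Your proposal is correct and follows essentially the same route as the paper's own proof: the same invariants, the same overshoot argument with backward propagation through the shifted block to get $f(u_q)=v_{q'}$ (Case~1) or $f(u_p)=v_{p'}$ (Case~2), the same direct contradiction in Case~1 (with loops absorbed there), and the same application of the min-ordering axiom to the arc pairs $(v_{p'},v_{q''}),(v_k,v_{q'})$ and their reversals in Case~2. The running-time analysis also matches the paper's verbatim.
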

\else
\begin{thm}\label{thm:algo signed-interval digraphs}
\textsc{OSI} 
on signed-interval digraphs with min orderings 
can be solved in $\mathrm{O}(n_Gn_H^3 + n_G^2n_H)$ time, 
where $n_G = |V_G|$ and $n_H = |V_H|$. 
\end{thm}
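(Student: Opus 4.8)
The plan is to follow the template of the proof of Theorem~\ref{thm:algo 2DOR graphs} almost verbatim, with the single min-ordering property of $\sigma$ taking over the combined role played there by the comparability and weak-ordering conditions. As before, let $f_h$ be the mapping after the $h$th iteration of Step~3; each $f_h$ is injective and order-preserving, and the only shifts the algorithm ever performs move a vertex one position to the right. Since $f_0(u_i)=v_i$ is the leftmost placement of $u_i$ consistent with the ordering, every ordered subgraph isomorphism $f$ satisfies $f_0(u_i)\preceq_\sigma f(u_i)$ for all $i$. Hence, if the algorithm failed to find such an $f$ despite one existing, then by the monotonicity of the shifts there would be a first index $h$ with $f_h(u_i)\preceq_\sigma f(u_i)$ for every $i$ but $f(u_j)\prec_\sigma f_{h+1}(u_j)$ for some $j$. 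I would derive a contradiction from this ``overshoot''.

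In the $(h+1)$th iteration, Step~3 selects a pair $u_p,u_q$ with $p\le q$. By the symmetry between an arc and its reverse, I would treat in detail only the subcase $(u_p,u_q)\in E_H$ and $(v_{p'},v_{q'})\notin E_G$, where $v_{p'}=f_h(u_p)$ and $v_{q'}=f_h(u_q)$; the reversed subcase is identical after exchanging the roles of source and target. The loop case $p=q$ is a degenerate instance of Case~1 (no index lies strictly between $p'=q'$), and there the backward-tracing below forces $f(u_p)=v_{p'}$, contradicting that the loop $(u_p,u_p)\in E_H$ demands $(v_{p'},v_{p'})\in E_G$. In Case~1, the backward-tracing argument of Theorem~\ref{thm:algo 2DOR graphs}---using that along a cascade $f_{h+1}(u_{m})=f_h(u_{m+1})$ and that each shift moves one position---yields $f(u_q)=f_h(u_q)=v_{q'}$. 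Since $f$ is an isomorphism, $(f(u_p),v_{q'})\in E_G$ with $f(u_p)=v_k$ for some $p'\le k<q'$; together with $(v_{p'},v_{q'})\notin E_G$ this contradicts the Case~1 hypothesis that $(v_k,v_{q'})\notin E_G$ for all such $k$.

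The crux is Case~2, and this is where the min ordering does the real work. Here the backward-tracing instead gives $f(u_p)=f_h(u_p)=v_{p'}$. The isomorphism $f$ then supplies an arc $(v_{p'},v_{q''})\in E_G$ with $q''\ge q'$ (since $f_h(u_q)\preceq_\sigma f(u_q)$), while the Case~2 hypothesis supplies an arc $(v_k,v_{q'})\in E_G$ with $p'<k<q'$. Applying the min-ordering property to these two arcs, and using $\min\{p',k\}=p'$ and $\min\{q'',q'\}=q'$, produces the arc $(v_{p'},v_{q'})\in E_G$, directly contradicting $(v_{p'},v_{q'})\notin E_G$. I expect this to be the main obstacle: recognizing that a single invocation of the min-ordering axiom simultaneously certifies the forbidden arc, replacing the two-step comparability-then-weak-ordering reasoning of the bipartite case, and verifying that the reversed subcase combines the arcs $(v_{q'},v_k)$ and $(v_{q''},v_{p'})$ into $(v_{q'},v_{p'})$ by the very same axiom.

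Finally, the running-time analysis is unchanged from Theorem~\ref{thm:algo 2DOR graphs}: Steps~1 and~2 cost $\mathrm{O}(n_H)$ and $\mathrm{O}(n_H^2)$, each execution of Step~3 spends $\mathrm{O}(n_G)$ deciding which case applies, and since every vertex is shifted rightward at most $n_G$ times the main loop repeats $\mathrm{O}(n_Gn_H)$ times, giving the claimed $\mathrm{O}(n_Gn_H^3+n_G^2n_H)$ bound.
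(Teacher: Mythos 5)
Your proposal is correct and matches the paper's own proof essentially verbatim: the same overshoot argument with $f_h(u_i)\preceq_\sigma f(u_i)$, the same backward-tracing to pin down $f(u_q)=f_h(u_q)$ (Case~1) or $f(u_p)=f_h(u_p)$ (Case~2), the same handling of the loop case $p=q$, and the same single application of the min-ordering axiom to the arcs $(v_{p'},v_{q''})$ and $(v_k,v_{q'})$ (and their reverses) to produce the forbidden arc, with an identical running-time analysis. The only cosmetic difference is that you spell out the index bookkeeping ($q''\ge q'$, $\min\{p',k\}=p'$, $\min\{q'',q'\}=q'$) that the paper leaves implicit.
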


\fi
\ifdefined\FV
\ifdefined\APP
\thmAlgoSignedIntervalDigraphs*\label{thm:algo signed-interval digraphs*}
\fi

\begin{proof}
In the algorithm, let $f_h$ denote the mapping from $V_H$ to $V_G$ after the $h$th iteration of Step~3 (i.e., before the $(h+1)$th iteration of Step~2).
The mapping $f_h$ is injective and order-preserving at each iteration.
To prove the theorem, it suffices to show that 
the algorithm finds an ordered subgraph isomorphism $f$ 
from $H$ to $G$ if one exists. 
Suppose that the algorithm fails to find $f$.
Then, there exists an index $h$ such that 
$f_h(u_i) \preceq_\sigma f(u_i) $ for every $i$ whereas 
$f(u_j) \prec_\sigma f_{h+1}(u_j)$ for some $j$.
Since Step~3 is executed in the $(h+1)$th iteration, 
there must be a pair of vertices $u_p$ and $u_q$ of $H$ with $p \le q$ such that 
$(u_p, u_q) \in E_H$ and $(f_h(u_p), f_h(u_q)) \notin E_G$ 
or 
$(u_q, u_p) \in E_H$ and $(f_h(u_q), f_h(u_p)) \notin E_G$. 
Let $v_{p'} = f_h(u_p)$ and $v_{q'} = f_h(u_q)$. 
\par
Suppose that Case~1 is executed in the $(h+1)$th iteration of Step~3. 
As with 2DOR graphs, we have $f(u_q) = f_h(u_q)$. 
If $p = q$ then $u_q$ has a loop while $f_h(u_q)$ does not, 
contradicting the assumption that $f$ is a subgraph isomorphism.
Now, suppose $p < q$. 
Then, $f_h(u_p) \preceq_\sigma f(u_p) \prec_\sigma f(u_q) = f_h(u_q)$. 
Since Case~1 is executed, 
$(v_k, v_{q'}) \notin E_G$ for any $k$ with $p' \le k < q'$ whereas $(f(u_p), f(u_q)) \in E_G$, or 
$(v_{q'}, v_k) \notin E_G$ for any $k$ with $p' \le k < q'$ whereas $(f(u_q), f(u_p)) \in E_G$, 
a contradiction. 
\par
Suppose that Case~2 is executed in the $(h+1)$th iteration of Step~3. 
As with 2DOR graphs, we have $f(u_p) = f_h(u_p)$. 
If $p = q$, we reach a contradiction as in Case 1. 
Now, suppose $p < q$. 
Since Case~2 is executed, 
there exists an index $k$ with $p' < k < q'$ such that 
$(v_k, v_{q'}) \in E_G$ or $(v_{q'}, v_k) \in E_G$. 
Therefore, graph $G$ contains one of the patterns shown in Figure~\ref{fig:signed:Case2}. 
In the former case, 
$(v_{p'}, v_{q'}) \notin E_G$ contradicts the property of min orderings: $(f(u_p), f(u_q)) \in E_G$ and $(v_k, v_{q'}) \in E_G$ imply $(v_{p'}, v_{q'}) \in E_G$. 
Similarly, 
in the latter case, 
$(v_{q'}, v_{p'}) \notin E_G$ contradicts the property of min orderings. 
Therefore, the algorithm finds 
the ordered subgraph isomorphism $f$.
\begin{figure}[t]
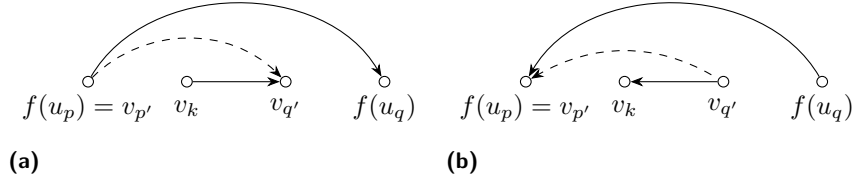

  \centering
  \subcaptionbox{}{\begin{tikzpicture}
\input{./figs/template_p4}
\draw [dashed,-Stealth] (x) to [out=45, in=135] (z);
\draw [-Stealth] (x) to [out=60, in=120] (w);
\draw [-Stealth] (y) to (z);
\end{tikzpicture}}
  \subcaptionbox{}{\begin{tikzpicture}
\input{./figs/template_p4}
\draw [dashed,shorten <=3pt,-Stealth] (z) to [out=150, in=30] (x);
\draw [-Stealth] (w) to [out=120, in=60] (x);
\draw [-Stealth] (z) to (y);
\end{tikzpicture}}
  \caption{
    The pattern appearing when Case~2 is executed in the $(h+1)$th iteration of Step~3. (Signed-interval digraphs)
    (a) The case when $(u_p, u_q) \in E_H$ and $(f_h(u_p), f_h(u_q)) \notin E_G$.
    (b) The case when $(u_q, u_p) \in E_H$ and $(f_h(u_q), f_h(u_p)) \notin E_G$.
  } 
  \label{fig:signed:Case2}
\end{figure}
\par
As with 2DOR graphs, 
the algorithm runs in $\mathrm{O}(n_Gn_H^3 + n_G^2n_H)$ time.
\end{proof}

As with 2DOR graphs, 
the vertex ordering $\tau$ of graph $H$ 
does not necessarily need to be a min ordering. 

\fi

\subsection{MCOIS on Threshold/Chain/Cochain Graphs}
\label{sec:alg-threshold}

In Section~\ref{sec:hard-threshold}, we proved the NP-hardness for OISI on threshold graphs, chain graphs, and cochain graphs. 
In this section, we propose polynomial-time algorithms for a more general problem, MCOIS, on these graph classes, assuming that the input vertex orderings are restricted to inclusion orderings.
We would like to note that the restriction on vertex orderings for these algorithms is natural in the sense that inclusion orderings characterize these graph classes. 
Let $G=(X\cup Y, E)$ with a bipartition $(X, Y)$, where $X$ (and $Y$) admits an inclusion ordering. 
Recall that an \emph{inclusion ordering} of $G$ is defined as the inclusion ordering on $Y$ followed by the inclusion ordering on $X$. 
We first present a polynomial-time algorithm for MCOIS on threshold graphs with inclusion orderings, and then show that the same approach applies to chain graphs and cochain graphs.

Let $G=(X_G\cup Y_G, E_G)$ and $H=(X_H\cup Y_H, E_H)$ be threshold graphs and $(y^G_1,$ \linebreak$\dots, y^G_{|Y_G|}, x^G_1, \dots, x^G_{|X_G|})$ and $(y^H_1, \dots, y^H_{|Y_H|}, x^H_1, \dots, x^H_{|X_H|})$ be inclusion orderings of $G$ and $H$, respectively. 
We assume that $X_G$ and $X_H$ are both cliques (or independent sets) in $G$ and $H$, respectively; otherwise, the answer is trivial. 
The vertices in $X_G$ are matched with at most one vertex in $Y_H$, and those in $Y_G$ are matched with at most one vertex in $X_H$. 
By guessing the match, we only consider the case that the vertices in $X_G$ (resp. $Y_G$) are matched with the vertices in $X_H$ (resp. $Y_H$) in an optimal solution $(Z, f_{G}, f_{H})$, where $f_{G}$ and $f_{H}$ are ordered induced subgraph isomorphisms from $Z$ to $G$ and $H$, respectively. 
Such an optimal solution satisfies one of the followings: (1) no vertex in $X_G$ is matched with a vertex in $X_H$, or (2) there exists a pair of the largest indices $(i, j)$ such that $f^{-1}_G(x^G_i) = f^{-1}_H(x^H_j)$. 
In case (1), we match as many vertices of $Y_G$ and $Y_H$ as possible; that is, we obtain $\min\{|Y_G|, |Y_H|\}$. 
In case (2), the vertices $Y_G \setminus N_{Y_G}(x^G_{i})$ are matched with the vertices $Y_H\setminus N_{Y_H}(x^H_{j})$ because they are isolated in $G[Y_G \cup \{x^G_i\}]$ and $H[Y_H \cup \{x^H_j\}]$. 
Thus, the number of vertices of the optimal solution is the sum of (i) the number of vertices of the connected ordered induced subgraph of $G_i\coloneqq G[X^G_i\cup N_{Y_G}(X^G_i)]$ and $H_j\coloneqq H[X^H_j\cup N_{Y_H}(X^H_j)]$, and (ii) the smaller of $|Y_G \setminus N_{Y_G}(x^G_{i})|$ and $|Y_H\setminus N_{Y_H}(x^H_{j})|$, where $X^G_i = \{x^G_1, \dots, x^G_{i}\}$ and $X^H_j = \{x^H_1, \dots, x^H_{j}\}$. 

It remains for us to compute (i) the size of a connected optimal ordered induced subgraph of $G_i$ and $H_j$ using dynamic programming. 
For $i\in [|X_G|]\cup\{0\}, j\in [|X_H|]\cup \{0\}$, the entry of $\dptable[i, j]$ represents the number of vertices of an optimal solution for $G_i$ and $H_j$ when $x^G_i$ is matched with $x^H_j$. 
If $i = 0$ or $j = 0$, then $\dptable[i, j] = 0$ since there is no vertex in $G_i$ or $H_j$. 
For $i\in [|X_G|]$ and $j\in [|X_H|]$, let a triple $T = (Z_{ij}, f_{G_i}, f_{H_j})$ be an optimal solution for $G_i$ and $H_j$ such that $x^G_{i}$ is matched with $x^H_{j}$. 
Let $i'<i$ and $j'<j$ be the largest indices such that $f_{G_i}^{-1}(x^G_{i'}) = f_{H_j}^{-1}(x^H_{j'})$. 
The triple $T$ can be computed from $(Z_{i' j'}, f_{G_{i'}}, f_{H_{j'}})$ of an optimal solution for $G_{i'}$ and $H_{j'}$ as follows; 
$x^G_{i}$ is matched with $x^H_{j}$, and the vertices in $N_{Y_G}(x^G_{i})\setminus N_{Y_G}(x^G_{i'})$ are matched with the vertices in $N_{Y_H}(x^H_{j})\setminus N_{Y_H}(x^H_{j'})$, as they have not been matched yet and can be matched. 
To consider the case that there is no such pair $(i', j')$, we assume that $N_{Y_G}(x^G_{0}) = N_{Y_H}(x^H_{0}) = \emptyset$ for $i'=j'=0$. 
Thus, we compute $\dptable[i, j]$ as follows: 
\[
 \dptable[i, j] = \max_{i'\le i, j'\le j}(\dptable[i', j'] + d_{ij} + 1), 
\]
where $d_{ij} = \min\{|N_{Y_G}(x^G_{i})\setminus N_{Y_G}(x^G_{i'})|, |N_{Y_H}(x^H_{j})\setminus N_{Y_H}(x^H_{j'})|\}$. 
Finally, we obtain the size of the maximum common ordered induced subgraph by computing 
\[
 \max_{i\in [|X_G|], j\in [|X_H|]} (\dptable[i, j] + \min\{|Y_G \setminus N_{Y_G}(x^G_{i})|, |Y_H\setminus N_{Y_H}(x^H_{j})|\}). 
\]

We analyze the time complexity of the algorithm. 
The algorithm first guesses the $n_G\cdot n_H + 1$ cases: $x_G\in X_G$ matches with $y_H\in Y_H$, $y_G\in Y_G$ matches with $x_H\in X_H$, and no such match exists. 
There are $\mathrm{O}(n_G n_H)$ table entries, each of which can be computed in $\mathrm{O}(n_G n_H)$ time.
Thus, the running time to fill the DP table is bounded by $\mathrm{O}(n_G^2 n_H^2)$. 
After computing the DP table, it takes $\mathrm{O}(n_G n_H)$ time to obtain the size of an optimal solution. 
Therefore, the overall time complexity of our algorithm is $\mathrm{O}(n_G^3 n_H^3)$. 

Our algorithm does not use the neighborhood relation within $X$ or $Y$ except for trivial solutions, and the vertex partition can be determined uniquely from the inclusion ordering. 
Thus, the algorithm for threshold graphs can be extended to chain graphs and cochain graphs. 
\begin{thm}\label{thm:tcc:poly-alg}
 MCOIS can be solved in polynomial time on threshold graphs, chain graphs, and cochain graphs with inclusion orderings. 
\end{thm}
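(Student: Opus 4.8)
The plan is to establish the theorem by proving that the dynamic-programming algorithm described above is correct and runs in polynomial time, and then to argue that it transfers to chain and cochain graphs. I would organize the proof around three claims: (i) a structural reduction justifying the restriction to solutions in which $X_G$ is matched with $X_H$ and $Y_G$ with $Y_H$, after a polynomial number of guesses; (ii) correctness of the recurrence for $\dptable$; and (iii) the running-time bound. The whole argument hinges on the \emph{staircase} adjacency structure forced by the inclusion orderings: a vertex $y \in Y_G$ is adjacent exactly to a suffix $x^G_s, x^G_{s+1}, \dots, x^G_{|X_G|}$ of the $X$-ordering, and symmetrically in $H$.

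First I would prove the structural reduction. Fix an optimal solution $(Z, f_G, f_H)$ and let $z_1 \prec \dots \prec z_m$ be the vertices of $Z$ in its ordering. Because both $\sigma$ and $\tau$ list $Y$ before $X$ and $f_G, f_H$ are order-preserving, the vertices mapped into $X_G$ form a suffix of $z_1, \dots, z_m$, and likewise those mapped into $X_H$; let $a$ and $b$ be the two split points. If $a = b$ the matching is ``pure'' ($X$ with $X$, $Y$ with $Y$); otherwise the block of vertices strictly between positions $\min\{a,b\}$ and $\max\{a,b\}$ is of a single \emph{crossover} type, say all mapped into $X_G$ in $G$ and into $Y_H$ in $H$ (a symmetric order argument rules out both crossover types occurring together). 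For threshold graphs such a block induces a subgraph of $Z$ that must embed simultaneously into the clique $X_G$ and the independent set $Y_H$, hence has at most one vertex. Guessing this single crossover pair (or that none exists) costs only $O(n_G n_H)$ cases, and after removing the guessed pair the remainder is exactly the pure-matching problem solved by the DP.

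Next I would verify the recurrence by proving, by induction on $i+j$, the invariant that $\dptable[i,j]$ equals the maximum number of vertices of a \emph{connected} ordered common induced subgraph of $G_i$ and $H_j$ in which $x^G_i$ is matched with $x^H_j$. For the lower bound, from an optimal entry $\dptable[i',j']$ I would extend the witness by matching $x^G_i \leftrightarrow x^H_j$ and then pairing the newly appearing neighbors $N_{Y_G}(x^G_i)\setminus N_{Y_G}(x^G_{i'})$ with $N_{Y_H}(x^H_j)\setminus N_{Y_H}(x^H_{j'})$; by the inclusion ordering each such $Y$-vertex is adjacent to exactly the matched $X$-vertices of index at least the current step and to none of smaller index, so the bipartite adjacencies agree across $G$ and $H$ and the result is a valid ordered induced subgraph. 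For the upper bound, given any witness I would let $(i',j')$ be the largest matched $X$-pair below $(i,j)$ and observe that every $Y$-vertex matched after step $(i',j')$ lies in the ``new'' neighbor sets, bounding its contribution by $d_{ij}$, while the part below $(i',j')$ is bounded by $\dptable[i',j']$ inductively. The main obstacle of the entire proof is precisely this induced-consistency check: one must argue that pairing newly appearing neighbors never forces a present-or-absent edge conflicting with an already-matched $X$-vertex, which is exactly where the nesting of neighborhoods is indispensable. Combining $\dptable$ with the isolated term $\min\{|Y_G\setminus N_{Y_G}(x^G_i)|, |Y_H\setminus N_{Y_H}(x^H_j)|\}$ then accounts for the $Y$-vertices non-adjacent to every matched $X$-vertex.

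For the running time and the extension, the table has $O(n_G n_H)$ entries, each computed by a maximization over $O(n_G n_H)$ predecessor pairs, so one DP costs $O(n_G^2 n_H^2)$; multiplying by the $O(n_G n_H)$ guesses gives the claimed $O(n_G^3 n_H^3)$ bound. For chain and cochain graphs I would observe that the recurrence for $\dptable$ references only the bipartite neighbor sets $N_{Y}(\cdot)$ and never the adjacencies internal to $X$ or $Y$; the latter are uniform (all present for cochain, all absent for chain) and enter solely through the trivial-case test that $X_G$ and $X_H$ are of the same type. Since the bipartition is recovered uniquely from the inclusion ordering, the identical DP computes the pure-matching optimum for all three classes. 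The one point genuinely needing re-checking per class is the crossover bound of step (i): the clique-meets-independent-set argument is special to threshold graphs, so for chain and cochain I would re-derive the analogous bound, or show the crossover can be absorbed into the pure matching without loss, before concluding.
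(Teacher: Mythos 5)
Your threshold-graph argument is correct and follows the paper's proof essentially step for step: the same reduction to ``pure'' matchings via the observation that a crossover block must embed into both a clique and an independent set (the paper states this as ``the vertices in $X_G$ are matched with at most one vertex in $Y_H$''), the same recurrence for $\dptable[i,j]$ with the quantity $d_{ij}$ plus the isolated-vertex term, and the same $\mathrm{O}(n_G^3 n_H^3)$ time bound; your split-point analysis of the crossover block and the explicit DP invariant are in fact more careful than the paper's own exposition.

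The genuine gap is the step you flagged at the end, and neither of your two proposed repairs can close it. For chain graphs all four parts $X_G$, $Y_G$, $X_H$, $Y_H$ are independent sets, so the clique-meets-independent-set argument gives no bound whatsoever on the crossover block, and the crossover cannot be absorbed into a pure matching either. Concretely, take $G$ to be the star with center $y \in Y_G$ and leaves $x_1, \dots, x_n \in X_G$, with inclusion ordering $(y, x_1, \dots, x_n)$, and $H$ the star with center $x' \in X_H$ and leaves $y'_1, \dots, y'_n \in Y_H$, with inclusion ordering $(y'_1, \dots, y'_n, x')$; both are connected chain graphs. The edgeless graph on $n$ vertices is a common ordered induced subgraph, realized by mapping it onto $\{x_1, \dots, x_n\}$ in $G$ and onto $\{y'_1, \dots, y'_n\}$ in $H$ --- a crossover block of size $n$ --- whereas any solution matching $X_G$ only with $X_H$ and $Y_G$ only with $Y_H$ has at most $\min\{|Y_G|, |Y_H|\} + \min\{|X_G|, |X_H|\} = 2$ vertices, and permitting one guessed crossover pair raises this to at most $3$. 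An analogous obstruction (with cliques in place of independent sets, so that a crossover block is a clique on both sides) arises for cochain graphs. Hence for these two classes the proof needs an additional algorithmic component that optimizes over solutions containing a large crossover block --- for instance a second DP matching $X_G$ against $Y_H$ (and symmetrically $Y_G$ against $X_H$) combined with the pure prefix and suffix --- rather than a re-derivation of the threshold bound or an absorption argument, both of which are false for these classes. Be aware that the paper's own one-sentence justification for the extension (``the algorithm does not use the neighborhood relation within $X$ or $Y$'') addresses only the DP recurrence and is silent on exactly this structural reduction, so on this point your proposal and the paper share the same hole; your instinct that the crossover bound must be re-checked per class was the right one, but closing that hole requires a new idea, not the two patches you suggest.
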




\subsection{MCO(I)S Parameterized by Pathwidth}
\label{sec:FPTalg-pathwidth}

\newcommand{\f}{\text{f}}
\renewcommand{\c}{\text{c}}
\renewcommand{\i}{\text{i}}



Let $G = (V_G, E_G)$ be a graph with $n_G$ vertices.
A \emph{path decomposition} of $G$ is a sequence of vertex subsets, called \emph{bags}, $\mathcal{P} = (X_1, X_2, \dots, X_r)$ satisfying: 
(1) $\bigcup_{i\in [r]} X_i = V_G$, 
(2) $\forall \{u, v\}\in E_G, \exists i\in [r], \{u, v\}\subseteq X_i$, and 
(3) for $v \in V_G$, the bags containing $v$ are consecutive.
The \emph{width} of a path decomposition is the size of the largest bag minus~1. 
The \emph{pathwidth} of a graph $G$ is the minimum width over all possible path decompositions of $G$.

A path decomposition $(X_0, X_1, X_2, \dots, X_{2n_G})$ of a graph $G = (V_G, E_G)$ is \emph{nice} if it satisfies the following additional constraints:
(4) $X_0 = X_{2n_G} = \emptyset$, and (5) for every $i\in [2n_G]$, there is either a vertex $x^\i \notin X_{i-1}$ such that $X_i = X_{i-1}\cup \{x^\i\}$, or a vertex $x^\f \in X_i$ such that $X_i = X_{i-1}\setminus\{x^\f\}$.
For a nice path decomposition, if $X_i = X_{i-1}\cup \{x^\i \}$, $X_i$ is called an \emph{introduce node} and $X_i$ \emph{introduces} $x^\i$, and if $X_i = X_{i-1}\setminus\{x^\f \}$, $X_i$ is called a \emph{forget node} and $X_i$ \emph{forgets} $x^\f$. 
Given a path decomposition of $G$ with width $w$, we can compute a nice path decomposition of $G$ with width $w$ in polynomial time~\cite{CyganFKLMPPS15}. 
For a nice path decomposition $\mathcal{P}$ of $G$, a vertex ordering of $\sigma = (x_1, \dots, x_{n_G})$ is called an \emph{introduce ordering} if for any two vertices $x_i, x_j\in V_G$, we have $x_i\prec_{\sigma} x_j$ if and only if the index of the bag where $x_i$ is introduced is smaller than that of $x_j$. 
This ordering is equivalent to the ordering defined in terms of the vertex separation number~\cite{Kinnersley92}. 
Based on this equivalence, we also say that $w$ is the pathwidth of the vertex ordering~$\sigma$.

In this section, we present FPT algorithms for MCO(I)S parameterized by the pathwidth $w$ of the vertex orderings of $G$ and $H$, assuming that their nice path decompositions $\mathcal{P}_G = (X_0, X_1, \dots, X_{2n_G})$ and $\mathcal{P}_H = (Y_0, Y_1, \dots, Y_{2n_H})$ of width at most $w$ and their introduce orderings $\sigma = (x_1, \dots, x_{n_G})$ and $\tau = (y_1, \dots, y_{n_H})$ associated with $\mathcal{P}_G$ and $\mathcal{P}_H$, respectively, are given as input.
The algorithms are based on dynamic programming on the nice path decompositions. 
Let $G_i$ (resp.~$H_i$) be an induced subgraph $G[\bigcup_{j\in [i]} X_j]$ (resp.~$H[\bigcup_{j\in [i]} Y_j]$) for an integer $i\in [2n_G]$ (resp.~$i\in [2n_H]$). 
Note that $G_{2n_G} = G$ and $H_{2n_H} = H$. 

Our algorithms compute an optimal solution $(Z_{ij}, f_{G_i}, f_{H_j})$ of $G_i$ and $H_j$, where $f_{G_i}$ and $f_{H_j}$ are ordered (induced) subgraph isomorphisms from $Z_{ij}$ to $G_i$ and $H_j$, respectively. 
The key idea of the algorithm is to match the introduce vertices $x^\i$ and $y^\i$ only when both $X_i$ and $Y_j$ are introduce nodes.  
It is easy to see that the ordering constraint at the introduced vertices is always satisfied since $x^\i$ and $y^\i$ are the largest in the suborderings induced by $V_{G_i}$ and $V_{H_j}$, respectively. 
We need to determine whether or not we can match $x^\i$ and $y^\i$ for MCOIS or to count the number of additional edges for MCOS when $x^\i$ matches $y^\i$. 
To this end, we maintain four subsets $X^\f, X^\c$ of $X_i$, and $Y^\f, Y^\c$ of $Y_j$ as a state in our dynamic programming. 
The first two subsets indicate that any vertex in $X^\f$ is matched with a \emph{forgotten} vertex, and any vertex in $X^\c$ is matched with a \emph{current} vertex in $Y_j$. 
Analogously, any vertex in $Y^\f$ and any vertex in $Y^\c$ are respectively matched with a forgotten vertex and a current vertex. 
Note that $|X^\c| = |Y^\c|$ has to hold. 
Since $f_{G_i}$ and $f_{H_j}$ are ordered (induced) subgraph isomorphisms, the matching between $X^\c$ and $Y^\c$ is uniquely determined.

More specifically, we maintain the following table. 
For integers $i\in [2n_G]\cup\{0\}$ and $j\in [2n_H]\cup \{0\}$, the value $\dptable[(i, X^\f, X^\c), (j, Y^\f, Y^\c)]$ is the number of edges (vertices) in an optimal solution $(Z_{ij}, f_{G_i}, f_{H_j})$ of $G_i$ and $H_j$ satisfying $X^\f, X^\c, Y^\f$, and $Y^\c$ conditions. 
If there is no $(f_{G_i}, f_{H_j})$, then the entry is $-\infty$. 
Since $X_{2n_G} = Y_{2n_H} = \emptyset$, $G_{2n_G} = G$ and $H_{2n_H} = H$, the entry $\dptable[(2n_G, \emptyset, \emptyset), (2n_H, \emptyset, \emptyset)]$ is stored the number of edges (vertices) in a maximum common ordered (induced) subgraph of $G$ and $H$. 
\ifdefined\APP
We describe the details of the computation of the recursion in Appendix~\ref{sec:pathwidth-recursion}. 
\fi

\ifdefined\FV
We describe how to compute each entry $\dptable[(i, X^\f, X^\c), (j, Y^\f, Y^\c)]$ of the table in dynamic programming.

\noindent {\bf Case $i = 0$ or $j = 0$: }
$\dptable[(i, \emptyset, \emptyset), (j, \emptyset, \emptyset)] = 0$ because there is no vertex in $G_i$ or $H_j$. 
If at least one of $X^\f, X^\c, Y^\f$ or $Y^\c$ is not an empty set, the entry takes $-\infty$. 

\noindent {\bf Case $X_i$ and $Y_j$ are both introduce nodes: }
Let $x^\i$ and $y^\i$ be vertices introduced at $X_i$ and $Y_j$, respectively. 
The recursion of $\dptable[(i, X^\f, X^\c), (j, Y^\f, Y^\c)]$ is as follows: 
\[
  \begin{cases}
   \dptable[(i-1, X^\f, X^\c\setminus\{x^\i\}), (j-1, Y^\f, Y^\c\setminus \{y^\i\})] + z & \text{if } x^\i \in X^\c \text{ and } y^\i\in Y^\c\\ 
   \dptable[(i, X^\f, X^\c), (j-1, Y^\f, Y^\c)] & \text{if } x^\i \in X^\c \text{ and } y^\i \notin Y^\c\\
   \dptable[(i-1, X^\f, X^\c), (j, Y^\f, Y^\c)] & \text{if } x^\i \notin X^\c \text{ and } y^\i \in Y^\c\\
   \dptable[(i-1, X^\f, X^\c), (j-1, Y^\f, Y^\c)] & \text{if } x^\i \notin X^\c \text{ and } y^\i \notin Y^\c\\
  \end{cases}
\]
In the equation of the first line, we match between $x^\i$ with $y^\i$. 
The value of $z$ added to the entry differs depending on the problems, induced or non-induced.

For the ``induced'' problem, we first verify that $x^\i$ and $y^\i$ are not adjacent to any vertex in $X^\f$ and $Y^\f$, respectively. 
If either of them is adjacent, it is impossible to obtain a common induced subgraph because the matched vertex for a vertex in $X^\f$ or $Y^\f$ is forgotten. 
For example, if $x^\i$ is adjacent to $x \in X^\f$, the vertex $y$ matched with $x$ is not adjacent to $y^\i$ because $y$ is forgotten. 
That is, $x^\i$ and $y^\i$ cannot be matched. 
Thus, if $x^\i$ is adjacent to a vertex in $X^\f$ (or similarly for $y^\i$ with $Y^\f$), the entry takes the value $-\infty$.  
Assume that $x^\i$ is not adjacent to any vertex in $X^\f$ and $y^\i$ is not adjacent to any vertex in $Y^\f$. 
We then verify whether the adjacency of $x^\i$ with $X^\c$ and $y^\i$ with $Y^\c$ is valid. 
Since the matching between $X^\c$ and $Y^\c$ is uniquely determined by vertex ordering, we can easily check the validity.  
If $x^\i$ and $y^\i$ can be matched, we set $z = 1$; otherwise, we set $z = -\infty$.

For the ``non-induced'' problem, we count the number of edges by matching $x^\i$ with $y^\i$. 
From the same reason for the ``induced'' problem, we do not count the number of edges between $x^\i$ with $x\in X^\f$ and $y^\i$ with $y\in Y^\f$. 
Thus, for each pair of vertices $x\in X^\c$ and $y\in Y^\c$ satisfying $f_{G_{i-1}}^{-1}(x) = f_{H_{j-1}}^{-1}(y)$, we check whether there are both edges $\{x^\i, x\}$ and $\{y^\i, y\}$ and $z$ is the number of the pairs of edges.

\noindent {\bf Case $X_i$ is an introduce node and $Y_j$ is a forget node: }
Let $x^\i$ and $y^\f$ be vertices introduced and forgotten at $X_i$ and $Y_j$, respectively. 
If $x^\i\in X^\c$ or $x^\i\in X^\f$, the recursion of $\dptable[(i, X^\f, X^\c), (j, Y^\f, Y^\c)]$ is the maximum among the three values: 
\begin{align*}
 &\dptable[(i, X^\f, X^\c), (j-1, Y^\f, Y^\c)], \\
 &\dptable[(i, X^\f\setminus\{x\}, X^\c\cup \{x\}), (j-1, Y^\f, Y^\c\cup \{y^\f\})]\text{, and}\\
 &\dptable[(i, X^\f, X^\c), (j-1, Y^\f\cup \{y^\f\}, Y^\c)].
\end{align*}
Note that integer $i$ in the first triplet does not decrease in the recursion since $x^\i$ is matched to a vertex in $V_{H_{j-1}}$ and is matched at an introduce node $Y_{j'}$ for some $j'<j$. 
The three equations are classified based on whether the forgotten vertex $y^\f$ is matched with a vertex or not. 
The first value 
corresponds to the case where $y^\f$ has no matched vertex in $V_{G_i}$. 
The second value applies when $y^\f$ is matched with a current vertex $x$. 
The third value represents the case where $y^\f$ is matched with a forgotten vertex. 

Next, we consider the case where $x^\i\notin X^\c$ and $x^\i\notin X^\f$. 
The equation $\dptable[(i, X^\f, X^\c), (j, Y^\f, Y^\c)]$ is the maximum among the three values: 
\begin{align*}
   &\dptable[(i-1, X^\f, X^\c), (j-1, Y^\f, Y^\c)], \\
   &\dptable[(i-1, X^\f\setminus\{x\}, X^\c\cup \{x\}), (j-1, Y^\f, Y^\c\cup \{y^\f\})]\text{, and}\\
   &\dptable[(i-1, X^\f, X^\c), (j-1, Y^\f\cup \{y^\f\}, Y^\c)].
\end{align*}
The difference from the equation in the case of $x^\i\in X^\c$ or $x^\i\in X^\f$ is that the integer $i$ is decremented since the introduced vertex $x^\i$ is not matched. 

Similar to the above equations, we can obtain the recurrence formula for the case of a forget node $X_i$ and an introduce node $Y_j$.

\noindent {\bf Case $X_i$ and $Y_j$ are both forget nodes: }
Let $x^\f$ and $y^\f$ be the vertices forgotten at $X_i$ and $Y_j$, respectively. 
We consider the following cases: $x^\f$ (resp. $y^\f$) is not matched, $x^\f$ (resp. $y^\f$) is matched with a forgotten vertex in $\bigcup_{j'=1}^{j-1}Y_{j'}\setminus Y_j$ (resp. $\bigcup_{i'=1}^{i-1}X_{i'}\setminus Y_i$), $x^\f$ (resp. $y^\f$) is matched with a current vertex in $Y_{j-1}$ (resp. $X_{i-1}$), and $x^\f$ is matched with $y^\f$. 
 Thus, the value $\dptable[(i, X^\f, X^\c), (j, Y^\f, Y^\c)]$ is assigned the maximum of the following 10 values: 
\begin{align*}
& \dptable[(i-1, X^\f, X^\c), (j-1, Y^\f, Y^\c)], \\
& \dptable[(i-1, X^\f, X^\c\cup\{x^\f\}), (j-1, Y^\f\setminus \{y\}, Y^\c\cup \{y\})], \\
& \dptable[(i-1, X^\f, X^\c\cup\{x^\f\}), (j-1, (Y^\f\cup \{y^\f\})\setminus \{y\}, Y^\c\cup \{y\})], \\
& \dptable[(i-1, X^\f\setminus \{x\}, X^\c\cup\{x\}), (j-1, Y^\f, Y^\c\cup \{y^\f\})], \\
& \dptable[(i-1, X^\f\setminus \{x\}, X^\c\cup\{x\}), (j-1, Y^\f, Y^\c\cup \{y^\f\})], \\
& \dptable[(i-1, X^\f\setminus \{x\}, X^\c\cup\{x, x^\f\}), (j-1, Y^\f\setminus\{y\}, Y^\c\cup \{y, y^\f\})], \\
& \dptable[(i-1, X^\f, X^\c\cup \{x^\f\}), (j-1, Y^\f, Y^\c\cup \{y^\f\})], \\
& \dptable[(i-1, X^\f, X^\c), (j-1, Y^\f\cup \{y^\f\}, Y^\c)], \\
& \dptable[(i-1, X^\f\cup \{x^\f\}, X^\c), (j-1, Y^\f, Y^\c)], \text{ and} \\
& \dptable[(i-1, X^\f\cup \{x^\f\}, X^\c), (j-1, Y^\f\cup \{y^\f\}, Y^\c)].  
\end{align*}

\fi

To analyze the time complexity of the algorithms, we estimate the size of the table $\dptable[(i, X^\f, X^\c)(j, Y^\f, Y^\c)]$. 
For each $i\in [2n_G]$, there are at most $3^w$ possible ways to form the first triple $(i, \cdot, \cdot)$ because $X_i$ is partitioned into $X^\f, X^\c$, and the remaining vertices. 
Similarly, for each $j\in [2n_H]$, there are at most $3^w$ possible ways to form the second triple $(j, \cdot, \cdot)$. 
Thus, the number of possible entries in the DP table is $\mathrm{O}(9^w)$. 
Each entry can be computed in time polynomial in $w$. 

Here, we carefully observe the algorithm for the non-induced problem, MCOS. 
A crucial difference from the MCOIS algorithm is that we do not take into account the edges between $x^\i$ and $X^\f$, nor $y^\i$ and $y^\f$, respectively. 
Thus, we do not need to maintain $X^\f$ and $Y^\f$. 
This improves the size of the DP table to $\mathrm{O}(4^w)$. 

\begin{thm}\label{thm:FPT_pathdecomposition}
Let $\mathcal{P}_G$ and $\mathcal{P}_H$ be path decompositions of graphs $G$ and $H$ with width at most $w$, and $\sigma$ and $\tau$ be introduce orderings of $G$ and $H$, respectively. 
We can solve MCOIS and MCOS in time $\mathrm{O}^*(9^w)$ and $\mathrm{O}^*(4^w)$, respectively.\footnote{The $\mathrm{O}^*$ notation supresses polynomial factors in the input size.}
\end{thm}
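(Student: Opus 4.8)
The plan is to establish Theorem~\ref{thm:FPT_pathdecomposition} by a direct dynamic programming over the two nice path decompositions $\mathcal{P}_G$ and $\mathcal{P}_H$ simultaneously, indexed by a pair of positions $(i,j)$ together with a bookkeeping of how the vertices currently in the bags $X_i$ and $Y_j$ participate in the partial solution. The central invariant I would maintain is the table $\dptable[(i, X^\f, X^\c), (j, Y^\f, Y^\c)]$ whose value is the number of edges (for MCOS) or vertices (for MCOIS) in an optimal common ordered subgraph of $G_i$ and $H_j$ consistent with the prescribed roles of the bag vertices: $X^\f$ (resp.\ $Y^\f$) are the bag vertices already matched to vertices that have been \emph{forgotten} on the other side, and $X^\c$ (resp.\ $Y^\c$) are those matched to vertices still \emph{current} in the opposite bag. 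The correctness hinges on two structural observations already noted in the excerpt: first, $|X^\c| = |Y^\c|$ and, because $f_{G_i}$ and $f_{H_j}$ are order-preserving, the matching between $X^\c$ and $Y^\c$ is forced by the orderings, so no extra data is needed to record \emph{which} current vertex is matched to which; second, since introduce vertices are always the $\prec$-maximum in their respective induced suborderings, matching $x^\i$ with $y^\i$ never violates the ordering constraint.

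First I would verify the base case: when $i=0$ or $j=0$ the induced subgraph is empty, forcing all four role-sets to be empty, so $\dptable[(i,\emptyset,\emptyset),(j,\emptyset,\emptyset)] = 0$ and every other entry is $-\infty$. Then I would argue the recurrences by a case analysis on whether $X_i$ and $Y_j$ are introduce or forget nodes, exactly as laid out in the four displayed cases. The key step is the \emph{introduce/introduce} case: here a new matched pair $(x^\i,y^\i)$ can be created, and the contribution $z$ records the newly realized adjacencies. For MCOIS one must \emph{validate} the match — $x^\i$ must be non-adjacent to every vertex of $X^\f$ (symmetrically for $y^\i$ and $Y^\f$), since the partner of an $X^\f$-vertex is already forgotten on the $H$ side and hence cannot be adjacent to $y^\i$, so any such adjacency would break the induced-isomorphism condition — and then one checks that the adjacencies of $x^\i$ to $X^\c$ exactly mirror those of $y^\i$ to $Y^\c$ under the forced matching, setting $z=1$ if valid and $z=-\infty$ otherwise. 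For MCOS the same match only \emph{counts} the pairs of simultaneously-present edges $\{x^\i,x\}$ and $\{y^\i,y\}$ over matched $(x,y)\in X^\c\times Y^\c$, and adjacencies into forgotten sets are simply ignored. In the forget cases one sums over the possible fates of the forgotten vertex (unmatched, matched to a forgotten partner, matched to a current partner, or — when both bags forget — matched to the vertex being forgotten on the other side), which yields the three- and ten-way maxima displayed.

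The correctness proof I would give is by induction on $i+j$: every partial solution counted by an entry decomposes uniquely according to the role of the most recently introduced or forgotten vertices, matching exactly one branch of the recurrence, and conversely each branch extends a valid sub-solution to a valid solution of $(G_i, H_j)$; the ordering constraint is preserved because introduce vertices are maximal and forget vertices are never involved in new order comparisons with still-current vertices on the opposite side. The final answer is read off at $\dptable[(2n_G,\emptyset,\emptyset),(2n_H,\emptyset,\emptyset)]$, since $G_{2n_G}=G$, $H_{2n_H}=H$, and both terminal bags are empty.

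For the running time I would bound the table size: for each of the $\mathrm{O}(n_G)$ positions $i$ the pair $(X^\f, X^\c)$ ranges over ordered partitions of $X_i$ into three parts (two labelled roles plus the unmatched remainder), giving at most $3^w$ choices, and symmetrically $3^w$ for the $H$ side, so there are $\mathrm{O}^*(9^w)$ entries, each computable in $\mathrm{poly}(w)$ time by scanning the bag vertices; this yields the $\mathrm{O}^*(9^w)$ bound for MCOIS. For MCOS the improvement comes from observing that edges into forgotten vertices are never counted, so the sets $X^\f$ and $Y^\f$ carry no information relevant to the edge count and may be dropped; only the current-matched sets $X^\c, Y^\c$ need tracking, leaving at most $2^w$ choices per side and hence $\mathrm{O}^*(4^w)$ entries, which gives the MCOS bound. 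The main obstacle I anticipate is the \emph{forget/forget} case, where the two forgotten vertices may be matched to each other, to current partners, or to already-forgotten partners in all combinations; enumerating these fates without double-counting — and checking in the induced setting that matching $x^\f$ to $y^\f$ is consistent with all previously fixed adjacencies — is the delicate part, and the ten displayed terms are precisely the exhaustive, mutually exclusive decomposition that resolves it.
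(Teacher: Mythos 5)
Your proposal is correct and follows essentially the same route as the paper: the identical DP state $(i, X^{\text{f}}, X^{\text{c}}), (j, Y^{\text{f}}, Y^{\text{c}})$ over the two nice path decompositions, the same key observations (the matching between current-matched sets is forced by the orderings, and introduce vertices are order-maximal so matching them never violates order), the same introduce/forget case analysis with the induced-validation versus edge-counting distinction, and the same table-size accounting ($3^w$ per side giving $\mathrm{O}^*(9^w)$ for MCOIS, dropping the forgotten-role sets to get $\mathrm{O}^*(4^w)$ for MCOS). No gaps; this matches the paper's argument.
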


\subsection{MCO(I)S Parameterized by Vertex Cover Number}

\label{sec:FPTalg-VC}
We now show that both MCOS and MCOIS are fixed-parameter tractable parameterized by vertex cover number of both input graphs,
even if there is no restriction on vertex orderings.
\begin{thm}
\label{thm:vc}
\textsc{MCOS} and \textsc{MCOIS} admit $\mathrm{O}^{*}(2^{p^{2}})$-time algorithms, where $p$ is the maximum of vertex cover numbers of input graphs.
\end{thm}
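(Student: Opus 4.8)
The plan is to exploit the fact that a bounded vertex cover forces any common ordered subgraph to split into a small but arbitrarily connected \emph{core} together with a large \emph{independent padding} that attaches to the core in a very restricted way. First I would fix vertex covers $C_G$ of $G$ and $C_H$ of $H$, each of size at most $p$ (computable in $\mathrm{O}^*(2^p)$ time by standard branching), and write $I_G = V_G\setminus C_G$ and $I_H = V_H\setminus C_H$ for the resulting independent sets. Given an optimal solution $(Z,f_G,f_H)$, partition $V_Z$ according to where each vertex is sent on the two sides: $P_{cc},P_{ci},P_{ic},P_{ii}$, where e.g.\ $P_{ci}=\{w : f_G(w)\in C_G,\ f_H(w)\in I_H\}$, and so on. Because $f_G$ and $f_H$ are injective, $|P_{cc}|+|P_{ci}|\le |C_G|\le p$ and $|P_{cc}|+|P_{ic}|\le |C_H|\le p$.

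The key structural observations follow from the fact that every edge of $Z$ maps to an edge of $G$ (hence has an endpoint whose $f_G$-image lies in $C_G$, i.e.\ an endpoint in $P_{cc}\cup P_{ci}$) and, simultaneously, to an edge of $H$ (an endpoint in $P_{cc}\cup P_{ic}$). Consequently every $P_{ii}$-vertex is adjacent only to $P_{cc}$, there are no edges inside $P_{ci}$ nor inside $P_{ic}$, and the core $R:=P_{cc}\cup P_{ci}\cup P_{ic}$ has at most $2p$ vertices. A short optimization over $|P_{cc}|,|P_{ci}|,|P_{ic}|$ subject to the two size constraints shows that the number of \emph{admissible} internal edges of $R$, namely $\binom{|P_{cc}|}{2}+|P_{cc}|(|P_{ci}|+|P_{ic}|)+|P_{ci}||P_{ic}|$, is at most $p^2$ (the maximum $p^2$ being attained at $|P_{cc}|=0$, $|P_{ci}|=|P_{ic}|=p$).

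The algorithm then guesses the core completely as an ordered, labeled graph: a linear order of its at most $2p$ vertices, a label in $\{cc,ci,ic\}$ for each, and its internal adjacency. By the edge bound the adjacency accounts for at most $2^{p^2}$ choices, while the ordering and labels contribute only $2^{\mathrm{O}(p\log p)}$; I would also guess the injective images of the cover-sides ($P_{cc}\cup P_{ci}$ into $C_G$ and $P_{cc}\cup P_{ic}$ into $C_H$), another $2^{\mathrm{O}(p\log p)}$ factor. This fixes a skeleton, whose consistency with $\sigma$ and $\tau$ is checked directly, and the whole enumeration is $\mathrm{O}^*(2^{p^2})$.

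It remains, for each surviving guess, to complete the embedding of the \emph{light} vertices---the independent-set sides of $P_{ci}$ and $P_{ic}$ and both sides of $P_{ii}$---into $I_G$ and $I_H$ so as to maximize the objective (edges for MCOS, vertices for MCOIS). Since these vertices are pairwise non-adjacent and attach only to the already-fixed $P_{cc}$, the surviving constraints are injectivity, global order-preservation in both $\sigma$ and $\tau$, and, for the induced variant, matching each light vertex's prescribed adjacency/non-adjacency toward the core. I would solve this by an alignment-style dynamic program that scans $\sigma$ and $\tau$ simultaneously, treating each light vertex as aligning a position of one ordering with a position (or a fixed cover slot) of the other, exactly as in the order-preserving matching underlying OISI. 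The main obstacle is precisely this last step: proving that the light vertices can be placed optimally in polynomial time while simultaneously respecting both orderings and the induced-adjacency types toward $P_{cc}$. Once that order-preserving assignment is shown to be polynomial, taking the best value over all $\mathrm{O}^*(2^{p^2})$ core guesses yields the claimed bound for both problems.
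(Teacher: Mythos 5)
Your structural analysis is sound: the four-way partition $P_{cc},P_{ci},P_{ic},P_{ii}$, the fact that every $P_{ii}$-vertex attaches only to $P_{cc}$ while $P_{ci}$ and $P_{ic}$ are independent in $Z$, and the bound of $p^2$ on admissible core edges are all correct. But the proposal has a genuine gap exactly where you flag it yourself: the polynomial-time placement of the light vertices is asserted, not proven, and that step is the entire algorithmic content of the theorem --- without it you have only an enumeration of skeletons. The gap is fillable, and the missing observation is a locality property of your own decomposition: once the core is fully guessed (its order, labels, internal adjacency, and the images of $P_{cc}\cup P_{ci}$ in $C_G$ and of $P_{cc}\cup P_{ic}$ in $C_H$), every remaining adjacency or non-adjacency constraint pairs one \emph{to-be-placed} image with one \emph{already-fixed} image. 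Indeed, for any two $Z$-vertices whose images in the same graph are both unplaced, those two images lie in that graph's independent set, so that side is forced to be a non-edge and imposes nothing; e.g.\ a $P_{ci}$--$P_{ic}$ pair $(w,x)$ decomposes into a check of $x$'s $G$-placement against the fixed $g_w$ and a check of $w$'s $H$-placement against the fixed $h_x$, tied together only by the guessed adjacency bit, and a $P_{ii}$-vertex matched to $(u_i,v_j)$ is checked only against fixed images $g_y, h_y$ of core vertices. Given this, a dynamic program with state (position $i$ in $\sigma$, position $j$ in $\tau$, index $t$ of the next core vertex to embed) and transitions ``skip $u_i$'', ``skip $v_j$'', ``match $(u_i,v_j)$ as a new $P_{ii}$-vertex'', ``embed core vertex $t+1$'' is well defined, runs in polynomial time, and handles both MCOS (count edges to $P_{cc}$ at match time) and MCOIS (enforce the non-adjacency checks). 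You would need to spell this out. A second, cosmetic issue: your enumeration is $2^{p^2}\cdot p^{\mathrm{O}(p)}$ (orderings, labels, and cover images), giving $\mathrm{O}^{*}(2^{p^2+\mathrm{O}(p\log p)})$ rather than the stated $\mathrm{O}^{*}(2^{p^2})$.

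For comparison, the paper's proof is lighter and asymmetric: it fixes a vertex cover $S$ of $G$ only, assumes $Z\supseteq S$ (by enumerating subsets of a minimum cover), and never guesses the actual images of $S$; instead it guesses, for each $s\in S$, only the \emph{twin class} of $H$ containing $f(s)$. Since $H$ has vertex cover number at most $p$, it has at most $2^{p}+p$ twin classes, so this guessing costs $(2^{p}+p)^{p}=\mathrm{O}(2^{p^2})$, and because adjacency between an $H$-vertex and ``the vertices of a given twin class'' is well defined, a simple two-dimensional DP over prefixes $(i,j)$ of $\sigma$ and $\tau$ (plus a greedy feasibility check for placing $S$) suffices --- no core adjacency, no image guessing, no locality argument. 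Your approach buys symmetry between $G$ and $H$ and an explicit description of the solution's shape; the paper's twin-class trick buys a shorter correctness proof and the clean $\mathrm{O}^{*}(2^{p^2})$ bound.
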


The algorithms for both problems are based on almost the same dynamic programming approach.
\ifdefined\FV
Thus, we first present the algorithm for MCOS and then explain how we can modify the algorithm for MCOIS\@.
\fi
\ifdefined\APP
Thus, we present the algorithm for MCOS only. A straightforward modification for MCOIS is omitted (see Section~\ref{sec:apx_vc}).
\fi

Let $G$ and $H$ be the input graphs of MCOS with vertex orderings
$(u_{1}, u_{2}, \dots,\linebreak[1] u_{|V(G)|})$ and $(v_{1}, v_{2}, \dots, v_{|V(H)|})$, respectively.
Let $U_{\le i} = \{u_{1}, \dots, u_{i}\}$ and $U_{> i} = \{u_{i+1}, \dots, u_{|V(G)|}\}$.
We define $V_{\le j}$ and $V_{> j}$ analogously.
Our goal is to find a subgraph $Z$ of $G$ with maximum number of edges that is ordered subgraph-isomorphic to $H$.

We assume that a vertex cover $S$ of $G$ with size at most $p$ is given and $Z$ contains all vertices in $S$.
We can justify this assumption by first finding a minimum vertex cover $S'$ of $G$ by an FPT algorithm parameterized by $p$~\cite{CyganFKLMPPS15},
trying all $2^{|S'|}$ subsets $S$ of $S'$, and then updating $G$ as $G \coloneqq G - (S'\setminus S)$.
For simplicity, we assume that $|S| = p$ in the following.

Let $T_{1}, \dots, T_{q}$ be the twin classes of $H$, that is, 
each $T_i$ is a maximal vertex such that $N_{H}(u) \setminus \{v\} = N_{H}(v) \setminus \{u\}$ holds for any two vertices $u, v \in T_{i}$.
These classes can be computed in polynomial time.
Note that $q \le 2^{p} + p$ as $H$ has vertex cover number at most~$p$.

\smallskip\textbf{Definition of the DP table.}
The key idea of the algorithm is to guess for each $s \in S$ the type of its image in $H$.
That is, we guess a mapping $\phi \colon S \to [q]$ such that $f(s) \in T_{\phi(s)}$ for each $s \in S$,
where $f$ is an ordered subgraph isomorphism from a subgraph of $G$ to $H$.

Now we define the DP table $\dptable_{\phi}$.
For $i \in [n_G] \cup \{0\}$ and $j \in [n_H] \cup \{0\}$,
the entry $\dptable_{\phi}[i,j]$ stores the maximum number of edges in an ordered subgraph $Z$ of $G$ with $S \subseteq V(Z) \subseteq S \cup U_{\le i}$ such that
there is an ordered subgraph isomorphism $f$ from $Z$ to $H$ that satisfies
\begin{itemize}
  \item $f((S \cap U_{\le i}) \cup (V(Z) \setminus S)) \subseteq V_{\le j}$,
  \item $f(S \cap U_{> i}) \subseteq V_{> j}$, and
  \item $f(s) \in T_{\phi(s)}$ for each $s \in S$,
\end{itemize}
where we set $\dptable_{\phi}[i,j] = -\infty$ if no subgraph of $G$ satisfies these conditions.
When $\dptable_{\phi}[i,j] \ne -\infty$, we call a pair $(Z,f)$ satisfying the conditions above
and having the maximum number of edges $|E(Z)| = \dptable_{\phi}[i,j]$ a \emph{certificate} for $\dptable_{\phi}[i,j]$.

Observe that, when $(i,j) = (|V(G)|, |V(H)|)$, the conditions above boil down to the last one, 
which asks that the ordered subgraph isomorphism $f$ respects $\phi$.
Thus, for finding the maximum size of an ordered common subgraph of $G$ and $H$ under $\phi$,
it suffices to compute $\dptable_{\phi}[|V(G)|, |V(H)|]$.

\smallskip\textbf{Computing the DP table entries.}
\textit{Preprocessing.}
We first test for each pair $(i,j)$ whether $\dptable_{\phi}[i,j] \ne -\infty$.
Let $s_{1}, \dots, s_{|S|}$ be the vertices in $S$ ordered in this way in $G$.
We check the existence of $|S|$ vertices $w_{1}, \dots, w_{|S|} \in V(H)$ ordered in this way in $H$ such that
\begin{itemize}
  \item $w_{h} \in T_{\phi(s_{h})}$ for each $s_h \in S$,
  \item $\{w_{h} \mid 1 \le h \le |S \cap U_{\le i}|\} \subseteq V_{\le j}$, and
  \item $\{w_{h} \mid |S \cap U_{\le i}| + 1 \le h \le |S|\} \subseteq V_{> j}$.
\end{itemize}
If such vertices exist, we can greedily find them by taking the first vertex satisfying the conditions for each $h$.
Such vertices guarantee that $\dptable_{\phi}[i,j] \ge 0$ since we can set $Z = (S, \emptyset)$ and set $f(s_{h}) = w_{h}$ for each $h$.
If such vertices do not exist, then we can conclude that $\dptable_{\phi}[i,j] = -\infty$.

\textit{Case 1:}
$\min(i,j) = 0$ and $\dptable_{\phi}[i,j] \ne -\infty$.
In this case, the definition of $\dptable_{\phi}$ implies that $V(Z) = S$ as follows.
If $i=0$, then $S \subseteq V(Z) \subseteq S \cup U_{\le 0}$ implies $S = V(Z)$ as $U_{\le 0} = \emptyset$.
If $j=0$, then $f((S \cap U_{\le i}) \cup (V(Z) \setminus S)) \subseteq V_{\le j}$ implies $V(Z) \setminus S = \emptyset$ as $V_{\le j} = \emptyset$,
and thus $V(Z) = S$ as $S \subseteq V(Z)$.
Since $\dptable_{\phi}[i,j] \ne -\infty$, there exists an ordered subgraph isomorphism from a subgraph of $G[S]$ to $H$ respecting $\phi$.
Thus, $\dptable_{\phi}[i,j]$ can be computed by counting the number of edges $\{s, s'\}$ in $G[S]$
such that the type $\phi(s)$ vertices and the type $\phi(s')$ vertices are adjacent.

\textit{Case 2:} 
$\min(i,j) \ne 0$, $u_{i} \in S$, and $\dptable_{\phi}[i,j] \ne -\infty$.
Let $(Z, f)$ be a certificate for $\dptable_{\phi}[i,j]$.
Recall that $Z$ is an ordered subgraph of $G$ and $f$ is an ordered subgraph isomorphism from $Z$ to $H$.
Since $u_{i} \in S$, its image $v_{h} \coloneqq f(u_{i})$ belongs to $V_{\le j}$.
Furthermore, $v_{h} \in T_{\phi(u_{i})}$ holds since $f$ obeys $\phi$.
We can see that the pair $(Z, f)$ satisfies all conditions for $\dptable_{\phi}[i-1, h-1]$ as well.
On the other hand, for a pair $(Z', f')$ that satisfies the conditions for $\dptable_{\phi}[i-1, h-1]$,
it is always possible to change $f'(u_{i})$ to $v_{h}$ without changing $Z'$
since $f'(u_{i}) \in V_{> h-1}$.
Hence, 
\[
  \dptable_{\phi}[i,j]
  =
  \max\{\dptable_{\phi}[i-1, h-1] \mid h \le j, \, v_{h} \in T_{\phi(u_{i})}\}.
\]

\textit{Case 3:} 
$\min(i,j) \ne 0$, $u_{i} \notin S$, and $\dptable_{\phi}[i,j] \ne -\infty$.
Let $(Z, f)$ be a certificate for $\dptable_{\phi}[i,j]$.
Since $u_{i} \notin S$, we need to consider two subcases of $u_{i} \notin V(Z)$ and $u_{i} \in V(Z)$, and then take the maximum of them.
If $u_{i} \notin V(Z)$, we can simply set $\dptable_{\phi}[i,j] = \dptable_{\phi}[i-1, j]$.
Now assume that $u_{i} \in V(Z)$ and $f(u_{i}) = v_{h}$ for some $h \le j$.
Let $f'$ be the restriction of $f$ to $V(Z) \setminus \{u_{i}\}$.
Observe that the pair $(Z - u_{i}, f')$ satisfies all conditions for $\dptable_{\phi}[i-1, h-1]$, and thus it achieves the maximum edge number $\dptable_{\phi}[i-1, h-1]$.
Since this pair $(Z - u_{i}, f')$ for $\dptable_{\phi}[i-1, h-1]$ is extended to the pair $(Z, f)$ for $\dptable_{\phi}[i, j]$ by setting $f(u_{i}) = v_{h}$,
the increase $\dptable_{\phi}[i, j] - \dptable_{\phi}[i-1, h-1]$ can be computed by counting the number of edges $\{u_{i}, s\}$ with $s \in S$
such that $v_{h}$ is adjacent to the type $\phi(s)$ vertices.
Thus, by setting $\mu(i,h)$ to this number, we can write $\dptable_{\phi}[i,j] = \dptable_{\phi}[i-1, h-1] + \mu(i, h)$.
The discussion so far implies that
\[
  \dptable_{\phi}[i,j]
  =
  \max\{
    \dptable_{\phi}[i-1, j], 
    \max\{\dptable_{\phi}[i-1, h-1] + \mu(i, h) \mid h \le j\}
  \}.
\]

\textbf{Running time.}
There are $q^{p}$ ($\le (2^{p}+p)^{p} \in \mathrm{O}(2^{p^{2}})$) candidates for the type assignment $\phi$ for $S$.
For each $\phi$, we can compute $\dptable_{\phi}[i, j]$ in time polynomial in $|V(G)|$ and $|V(H)|$ for all $(i,j)$.
We output the maximum of $\dptable_{\phi}[|V(G)|, |V(H)|]$ among all possible $\phi$.
In total, the running time is bounded by $\mathrm{O}^{*}(2^{p^{2}})$.

\ifdefined\FV
\ifdefined\APP
\section{Modifications for MCOIS parameterized by vertex cover number}
\label{sec:apx_vc}
\fi

\ifdefined\FV
\smallskip\textbf{Changes for handling induced subgraphs.}
\fi

Let us see how the algorithm in Section~\ref{sec:FPTalg-VC} for MCOS parameterized by vertex cover number can be modified for the induced-subgraph variant MCOIS\@.
The changes are obvious ones to handle induced subgraphs instead of (not necessarily induced) subgraphs.

We reject the guessed type correspondence $\phi$ if the mapping $f$ defined as $f(s_{i}) = w_{i}$ for $i \in [|S|]$
is not an ordered isomorphism between the induced subgraphs $G[S]$ and $H[\{w_{1}, \dots, w_{|S|}\}]$,
where $s_{1}, \dots, s_{|S|}$ and $w_{1}, \dots, w_{|S|}$ are the vertices defined in the preprocessing phase.

We change the definition of $\dptable_{\phi}[i,j]$ in such a way that 
it counts the maximum number of vertices in a common ordered induced subgraph.
We need no changes for the preprocessing phase since the additional test was already done for $\phi$.

In the first case of the DP table computation, we set $\dptable_{\phi}[i,j] = |S|$.
The second case needs no changes.
In the third case, we set $\mu(i,h) = 1$ if we can map $u_{i}$ to $v_{h}$, that is, we check for each $s \in S$ whether 
$u_{i}$ is adjacent to $s$ in $G$ if and only if $v_{h}$ is adjacent to the type $\phi(s)$ vertices in $H$;
otherwise, we set $\mu(i,h) = -\infty$.

\fi


\ifdefined\APP
\appendix

\section{Omitted Proofs for NP-completeness}



\section{Omitted Algorithm and Proof for Signed-interval Digraphs}
\label{sec:apx_min_orderable}

\section{The recursion on FPT algorithms parameterized by pathwidth}
\label{sec:pathwidth-recursion}

\fi

\begin{thebibliography}{10}

\bibitem{BelmonteHH12}
R{\'{e}}my Belmonte, Pinar Heggernes, and Pim van~'t Hof.
\newblock Edge contractions in subclasses of chordal graphs.
\newblock {\em Discret. Appl. Math.}, 160(7-8):999--1010, 2012.
\newblock \href {https://doi.org/10.1016/J.DAM.2011.12.012}
  {\path{doi:10.1016/J.DAM.2011.12.012}}.

\bibitem{BoseBL98}
Prosenjit Bose, Jonathan~F. Buss, and Anna Lubiw.
\newblock Pattern matching for permutations.
\newblock {\em Inf. Process. Lett.}, 65(5):277--283, 1998.
\newblock \href {https://doi.org/10.1016/S0020-0190(97)00209-3}
  {\path{doi:10.1016/S0020-0190(97)00209-3}}.

\bibitem{GraphClasses}
Andreas Brandstädt, Van~Bang Le, and Jeremy~P. Spinrad.
\newblock {\em Graph Classes: A Survey}.
\newblock Society for Industrial and Applied Mathematics, 1999.
\newblock \href {https://doi.org/10.1137/1.9780898719796}
  {\path{doi:10.1137/1.9780898719796}}.

\bibitem{CyganFKLMPPS15}
Marek Cygan, Fedor~V. Fomin, Lukasz Kowalik, Daniel Lokshtanov, D{\'{a}}niel
  Marx, Marcin Pilipczuk, Michal Pilipczuk, and Saket Saurabh.
\newblock {\em Parameterized Algorithms}.
\newblock Springer, 2015.
\newblock \href {https://doi.org/10.1007/978-3-319-21275-3}
  {\path{doi:10.1007/978-3-319-21275-3}}.

\bibitem{Damaschke90-incollection}
Peter Damaschke.
\newblock Forbidden ordered subgraphs.
\newblock In Rainer Bodendiek and Rudolf Henn, editors, {\em Topics in
  Combinatorics and Graph Theory: Essays in Honour of Gerhard Ringel}, pages
  219--229. Physica-Verlag HD, Heidelberg, 1990.
\newblock \href {https://doi.org/10.1007/978-3-642-46908-4_25}
  {\path{doi:10.1007/978-3-642-46908-4_25}}.

\bibitem{Damaschke90}
Peter Damaschke.
\newblock Induced subgraph isomorphism for cographs in {NP}-complete.
\newblock In {\em {WG} 1990}, volume 484 of {\em LNCS}, pages 72--78. Springer,
  1990.
\newblock \href {https://doi.org/10.1007/3-540-53832-1_32}
  {\path{doi:10.1007/3-540-53832-1_32}}.

\bibitem{di2013protein}
Luisa Di~Paola, Micol De~Ruvo, Paola Paci, Daniele Santoni, and Alessandro
  Giuliani.
\newblock Protein contact networks: an emerging paradigm in chemistry.
\newblock {\em Chemical reviews}, 113(3):1598--1613, 2013.
\newblock \href {https://doi.org/10.1021/cr3002356}
  {\path{doi:10.1021/cr3002356}}.

\bibitem{FHH99-Comb}
Tom{\'{a}}s Feder, Pavol Hell, and Jing Huang.
\newblock List homomorphisms and circular arc graphs.
\newblock {\em Combinatorica}, 19(4):487--505, 1999.
\newblock \href {https://doi.org/10.1007/s004939970003}
  {\path{doi:10.1007/s004939970003}}.

\bibitem{FHHR12-DAM}
Tom{\'{a}}s Feder, Pavol Hell, Jing Huang, and Arash Rafiey.
\newblock Interval graphs, adjusted interval digraphs, and reflexive list
  homomorphisms.
\newblock {\em Discrete Appl. Math.}, 160(6):697--707, 2012.
\newblock \href {https://doi.org/10.1016/j.dam.2011.04.016}
  {\path{doi:10.1016/j.dam.2011.04.016}}.

\bibitem{GJ79}
Michael~R. Garey and David~S. Johnson.
\newblock {\em Computers and Intractability: A Guide to the Theory of
  NP-Completeness}.
\newblock W. H. Freeman, 1979.

\bibitem{golumbic04}
Martin~Charles Golumbic.
\newblock {\em Algorithmic Graph Theory and Perfect Graphs}.
\newblock Elsevier Science, 2004.

\bibitem{GuillemotM14}
Sylvain Guillemot and D{\'{a}}niel Marx.
\newblock Finding small patterns in permutations in linear time.
\newblock In Chandra Chekuri, editor, {\em Proceedings of the Twenty-Fifth
  Annual {ACM-SIAM} Symposium on Discrete Algorithms, {SODA} 2014, Portland,
  Oregon, USA, January 5-7, 2014}, pages 82--101. {SIAM}, 2014.
\newblock \href {https://doi.org/10.1137/1.9781611973402.7}
  {\path{doi:10.1137/1.9781611973402.7}}.

\bibitem{HeggernesMV10}
Pinar Heggernes, Daniel Meister, and Yngve Villanger.
\newblock Induced subgraph isomorphism on interval and proper interval graphs.
\newblock In {\em {ISAAC} 2010}, volume 6507 of {\em LNCS}, pages 399--409.
  Springer, 2010.
\newblock \href {https://doi.org/10.1007/978-3-642-17514-5_34}
  {\path{doi:10.1007/978-3-642-17514-5_34}}.

\bibitem{HeggernesHMV15}
Pinar Heggernes, Pim van~'t Hof, Daniel Meister, and Yngve Villanger.
\newblock Induced subgraph isomorphism on proper interval and bipartite
  permutation graphs.
\newblock {\em Theor. Comput. Sci.}, 562:252--269, 2015.
\newblock \href {https://doi.org/10.1016/j.tcs.2014.10.002}
  {\path{doi:10.1016/j.tcs.2014.10.002}}.

\bibitem{HellH04}
Pavol Hell and Jing Huang.
\newblock Certifying {LexBFS} recognition algorithms for proper interval graphs
  and proper interval bigraphs.
\newblock {\em {SIAM} J. Discret. Math.}, 18(3):554--570, 2004.
\newblock \href {https://doi.org/10.1137/S0895480103430259}
  {\path{doi:10.1137/S0895480103430259}}.

\bibitem{HHLM20-SIDMA}
Pavol Hell, Jing Huang, Jephian~C.{-}H. Lin, and Ross~M. McConnell.
\newblock Bipartite analogues of comparability and cocomparability graphs.
\newblock {\em {SIAM} J. Discret. Math.}, 34(3):1969--1983, 2020.
\newblock \href {https://doi.org/10.1137/19M1263789}
  {\path{doi:10.1137/19M1263789}}.

\bibitem{HHMR20-SIDMA}
Pavol Hell, Jing Huang, Ross~M. McConnell, and Arash Rafiey.
\newblock Min-orderable digraphs.
\newblock {\em {SIAM} J. Discret. Math.}, 34(3):1710--1724, 2020.
\newblock \href {https://doi.org/10.1137/19M1241763}
  {\path{doi:10.1137/19M1241763}}.

\bibitem{Huang06-JCTSB}
Jing Huang.
\newblock Representation characterizations of chordal bipartite graphs.
\newblock {\em J. Combin. Theory Ser. {B}}, 96(5):673--683, 2006.
\newblock \href {https://doi.org/10.1016/j.jctb.2006.01.001}
  {\path{doi:10.1016/j.jctb.2006.01.001}}.

\bibitem{Huang18-DAM}
Jing Huang.
\newblock Non-edge orientation and vertex ordering characterizations of some
  classes of bigraphs.
\newblock {\em Discret. Appl. Math.}, 245:190--193, 2018.
\newblock \href {https://doi.org/10.1016/j.dam.2017.02.001}
  {\path{doi:10.1016/j.dam.2017.02.001}}.

\bibitem{KijimaOSU12}
Shuji Kijima, Yota Otachi, Toshiki Saitoh, and Takeaki Uno.
\newblock Subgraph isomorphism in graph classes.
\newblock {\em Discret. Math.}, 312(21):3164--3173, 2012.
\newblock \href {https://doi.org/10.1016/j.disc.2012.07.010}
  {\path{doi:10.1016/j.disc.2012.07.010}}.

\bibitem{Kinnersley92}
Nancy~G. Kinnersley.
\newblock The vertex separation number of a graph equals its path-width.
\newblock {\em Inf. Process. Lett.}, 42(6):345--350, 1992.
\newblock \href {https://doi.org/10.1016/0020-0190(92)90234-M}
  {\path{doi:10.1016/0020-0190(92)90234-M}}.

\bibitem{KonagayaOU16}
Matsuo Konagaya, Yota Otachi, and Ryuhei Uehara.
\newblock Polynomial-time algorithms for {Subgraph Isomorphism} in small graph
  classes of perfect graphs.
\newblock {\em Discret. Appl. Math.}, 199:37--45, 2016.
\newblock \href {https://doi.org/10.1016/j.dam.2015.01.040}
  {\path{doi:10.1016/j.dam.2015.01.040}}.

\bibitem{MarxS13}
D{\'{a}}niel Marx and Ildik{\'{o}} Schlotter.
\newblock Cleaning interval graphs.
\newblock {\em Algorithmica}, 65(2):275--316, 2013.
\newblock \href {https://doi.org/10.1007/s00453-011-9588-0}
  {\path{doi:10.1007/s00453-011-9588-0}}.

\bibitem{MRT88-JGT}
Clyde~L. Monma, Bruce Reed, and William~T. Trotter.
\newblock Threshold tolerance graphs.
\newblock {\em J. Graph Theory}, 12(3):343--362, 1988.
\newblock \href {https://doi.org/10.1002/jgt.3190120307}
  {\path{doi:10.1002/jgt.3190120307}}.

\bibitem{STU09-ISCAS}
Anish Man~Singh Shrestha, Satoshi Tayu, and Shuichi Ueno.
\newblock Orthogonal ray graphs and nano-{PLA} design.
\newblock In {\em Proceedings of {IEEE} International Symposium on Circuits and
  Systems, {ISCAS} 2009}, pages 2930--2933, 2009.
\newblock \href {https://doi.org/10.1109/ISCAS.2009.5118416}
  {\path{doi:10.1109/ISCAS.2009.5118416}}.

\bibitem{STU10-DAM}
Anish Man~Singh Shrestha, Satoshi Tayu, and Shuichi Ueno.
\newblock On orthogonal ray graphs.
\newblock {\em Discrete Appl. Math.}, 158(15):1650--1659, 2010.
\newblock \href {https://doi.org/10.1016/j.dam.2010.06.002}
  {\path{doi:10.1016/j.dam.2010.06.002}}.

\bibitem{Spinrad88-JCTSB}
Jeremy~P. Spinrad.
\newblock Circular-arc graphs with clique cover number two.
\newblock {\em J. Combin. Theory Ser. {B}}, 44(3):300--306, 1988.
\newblock \href {https://doi.org/10.1016/0095-8956(88)90038-X}
  {\path{doi:10.1016/0095-8956(88)90038-X}}.

\bibitem{Wood06}
David~R. Wood.
\newblock Characterisations of intersection graphs by vertex orderings.
\newblock {\em Australas. {J} Comb.}, 34:261--268, 2006.
\newblock URL: \url{http://ajc.maths.uq.edu.au/pdf/34/ajc\_v34\_p261.pdf}.

\end{thebibliography}
\end{document}